\theoremstyle{plain}
\newtheorem{lemma}{Lemma}
\newtheorem{open}{Open problem}
\newtheorem{rem}{Remark}
\newtheorem{clm}{Claim}
\newtheorem{obs}{Observation}
\theoremstyle{definition}
\def\apxmark{$\!\!${\bf(*)}}
\newcommand\footnoteref[1]{\protected@xdef\@thefnmark{\ref{#1}}\@footnotemark}
\def\clause#1#2#3#4{
	\begin{scope}[square/.style = {regular polygon, regular polygon sides=4, scale=1.2}]
		
		\node[fill=black, opacity=1, draw=black, square] at (#1,#2) {};
		\node[fill=black, opacity=1, draw=black, square] at (#1 + 0.6, #3) {};
		\node[fill=black, opacity=1, draw=black, square] at (#1 + 1.2,#4) {};
		
		\draw (#1,#2) -- (#1,0);
		\draw (#1 + 0.6, #3) -- (#1+ 0.6,0);
		\draw (#1 + 1.2,#4) -- (#1+ 1.2,0);
	\end{scope}
}
\def\crossing#1#2{
	\begin{scope}[shift={#1}, scale=#2]
		\node (A) at (0,0.7) {}; 
		\node (B) at (0.6,1.5) {}; 
		
		\node (C) at (0.4,-0.1) {}; 
		\node (D) at (0.6,0.3) {}; 
		
		\node (E) at (1.4,0.1) {}; 
		\node (F) at (1.4,0.7) {}; 
		
		\node (G) at (1.2,1.1) {}; 
		\node (H) at (1.4,2) {}; 
		\node (I) at (1.85,0.75) {}; 
		\node (J) at (2.1,1.6) {}; 
		
		\node (K) at (2.2,-0.2) {}; 
		\node (L) at (2.2,0.8) {}; 
		
		\node (M) at (1.3,-1) {}; 
		\node (N) at (1.3,-0.5) {}; 
		
		\node (O) at (2,-1.4) {}; 
		\node (P) at (2,-0.8) {}; 
		
		\node (Q) at (2.7,0) {}; 
		\node (R) at (3.1,0) {}; 
		
		\node (L1) at (-2,1) {};
		\node (L2) at (-1.2,0.6) {};
		
		\node (L3) at (-0.6,0.9) {};
		\node (L4) at (-0.6,0.3) {};
		
		\node (R1) at (3.5,0.3) {};
		\node (R2) at (3.5,-0.3) {};
		
		\node (R3) at (4,0) {};
		\node (R4) at (4,-0.6) {};
		
		\node (R5) at (4.5,0.7) {};
		\node (R6) at (4.5,-0.3) {};
		
		\node (U1) at (1,2.5) {};
		\node (U2) at (1.6,2.5) {};
		
		\node (U3) at (1.4,3) {};
		\node (U4) at (1.4,4) {};
		
		\node (D1) at (1.8,-2.1) {};
		\node (D2) at (2.4,-1.9) {};

		\draw[thick] (A)--(B);
		\draw[thick] (C)--(D);
		\draw[thick] (E)--(F);
		\draw[thick] (G)--(H);
		\draw[thick] (I)--(J);
		\draw[thick] (K)--(L);
		\draw[thick] (M)--(N);
		\draw[thick] (O)--(P);
		\draw[thick] (Q)--(R);
		\draw[thick] (L1)--(L2);
		\draw[thick] (L3)--(L4);
		\draw[thick] (R1)--(R2);
		\draw[thick] (R3)--(R4);
		\draw[thick] (R5)--(R6);
		\draw[thick] (D1)--(D2);
		\draw[thick] (U1)--(U2);
		\draw[thick] (U3)--(U4);
		
	\end{scope}
}
\def\hedgeT#1#2{
	\begin{scope}[shift={#1}]
		\foreach [evaluate={\j = int(mod(\i,2)); \x = int(mod(\i,3));}] \i in {1,...,#2}
		{
			\ifthenelse{\x = 1}
			{
				\node[BLUE] (\i) at (\i - 0.2, 0) {};
				\node[RED] (\i') at (\i + 0.2 , 0.3) {};
			}
			{}
			
			\ifthenelse{\x = 2}
			{
				\node[GREEN] (\i) at (\i - 0.2, 0) {};
				\node[BLUE] (\i') at (\i + 0.2 , 0.3) {};
				
			}
			{}
			
			\ifthenelse{\x = 0}
			{
				\node[RED] (\i) at (\i - 0.2, 0) {};
				\node[GREEN] (\i') at (\i + 0.2 , 0.3) {};
			}
			{}

			\ifthenelse{\j = 0}
			{
				\pgfmathtruncatemacro\k{\i-1};
				\draw[black, thick] (\i)--(\k);
				\draw[black, thick] (\i')--(\k');		
			}
			{}			
		}
		
	\end{scope}
}
\def\hedgeF#1#2{
	\begin{scope}[shift={#1}]
		\foreach [evaluate={\j = int(mod(\i,2)); \x = int(mod(\i,3));}] \i in {1,...,#2}
		{
			\ifthenelse{\x = 1}
			{
				\node[RED] (\i) at (\i - 0.2, 0) {};
				\node[BLUE] (\i') at (\i + 0.2 , 0.3) {};
			}
			{}
			
			\ifthenelse{\x = 2}
			{
				\node[GREEN] (\i) at (\i - 0.2, 0) {};
				\node[RED] (\i') at (\i + 0.2 , 0.3) {};
				
			}
			{}
			
			\ifthenelse{\x = 0}
			{
				\node[BLUE] (\i) at (\i - 0.2, 0) {};
				\node[GREEN] (\i') at (\i + 0.2 , 0.3) {};
			}
			{}

			\ifthenelse{\j = 0}
			{
				\pgfmathtruncatemacro\k{\i-1};
				\draw[black, thick] (\i)--(\k);
				\draw[black, thick] (\i')--(\k');		
			}
			{}			
		}
		
	\end{scope}
}
\def\hedgeL#1#2{
	\begin{scope}[shift={#1}]
		\foreach [evaluate={\j = int(mod(\i,2)); \x = int(mod(\i,3));}] \i in {1,...,#2}
		{
			\ifthenelse{\x = 1}
			{
				\node[BLUE] (\i) at (\i + 0.2, 0) {};
				\node[RED] (\i') at (\i - 0.2 , 0.3) {};
			}
			{}
			
			\ifthenelse{\x = 2}
			{
				\node[RED] (\i) at (\i + 0.2, 0) {};
				\node[GREEN] (\i') at (\i - 0.2 , 0.3) {};
				
			}
			{}
			
			\ifthenelse{\x = 0}
			{
				\node[GREEN] (\i) at (\i + 0.2, 0) {};
				\node[BLUE] (\i') at (\i - 0.2 , 0.3) {};
			}
			{}

			\ifthenelse{\j = 0}
			{
				\pgfmathtruncatemacro\k{\i-1};
				\draw[black, thick] (\i)--(\k);
				\draw[black, thick] (\i')--(\k');		
			}
			{}			
		}
		
	\end{scope}
}
\def\nae#1{
	\begin{scope}[shift={#1}]
		\node (A) at (-1.4,0.4) {}; 
		\node (B) at (-1.4,-0.4) {}; 
		
		\node (C) at (-1,0) {}; 
		\node (D) at (-0.4,0) {}; 
		
		\node (E) at (1.4,0.4) {}; 
		\node (F) at (1.4,-0.4) {}; 
		
		\node (G) at (1,0) {}; 
		\node (H) at (0.4,0) {}; 
		
		\node (I) at (-0.4,1.4) {}; 
		\node (J) at (0.4,1.4) {}; 
		
		\node (K) at (0,1) {}; 
		\node (L) at (0,0.4) {}; 
		
		\node (L1) at (-2.7,0.4) {};
		\node (L2) at (-2.7,-0.4) {};
		
		\node (L3) at (-2,0.8) {};
		\node (L4) at (-2,0) {};
		
		\node (R1) at (2.6,0.4) {};
		\node (R2) at (2.6,-0.4) {};
		
		\node (R3) at (2,0.8) {};
		\node (R4) at (2,0) {};
		
		\node (U1) at (-0.4,2.5) {};
		\node (U2) at (0.4,2.5) {};
		
		\node (U3) at (0,1.9) {};
		\node (U4) at (0.8,1.9) {};
		
		\draw[thick] (A)--(B);
		\draw[thick] (C)--(D);
		\draw[thick] (E)--(F);
		\draw[thick] (G)--(H);
		\draw[thick] (I)--(J);
		\draw[thick] (K)--(L);
		\draw[thick] (L1)--(L2);
		\draw[thick] (L3)--(L4);
		\draw[thick] (R1)--(R2);
		\draw[thick] (R3)--(R4);
		\draw[thick] (U1)--(U2);
		\draw[thick] (U3)--(U4);

	\end{scope}
}
\def\vedgeT#1#2{
	\begin{scope}[shift={#1}]
		\foreach [evaluate={\j = int(mod(\i,2)); \x = int(mod(\i,3));}] \i in {1,...,#2}
		{
			\ifthenelse{\x = 1}
			{
				\node[GREEN] (\i) at (0, -\i - 0.3) {};
				\node[BLUE] (\i') at (0.25, -\i + 0.3) {};
			}
			{}
			
			\ifthenelse{\x = 2}
			{
				\node[BLUE] (\i) at (0, -\i - 0.3) {};
				\node[RED] (\i') at (0.25, -\i + 0.3) {};
			}
			{}
			
			\ifthenelse{\x = 0}
			{
				\node[RED] (\i) at (0, -\i - 0.3) {};
				\node[GREEN] (\i') at (0.25, -\i + 0.3) {};
			}
			{}

			\ifthenelse{\j = 0}
			{
				\pgfmathtruncatemacro\k{\i-1};
				\draw[black, thick] (\i)--(\k);
				\draw[black, thick] (\i')--(\k');		
			}
			{}			
		}
		
	\end{scope}
}
\def\vedgeCT#1#2{
	\begin{scope}[shift={#1}]
		\node[GREEN] (A) at (-0.3, 0) {};
		\node[RED] (B) at (0, -0.4) {};
		\draw[black, thick] (A)--(B);
		\foreach [evaluate={\j = int(mod(\i,2)); \x = int(mod(\i,3));}] \i in {1,...,#2}
		{
			\ifthenelse{\x = 1}
			{
				\node[BLUE] (\i) at (0, -\i - 0.3) {};
				\node[GREEN] (\i') at (0.25, -\i + 0.3) {};
			}
			{}
			
			\ifthenelse{\x = 2}
			{
				\node[GREEN] (\i) at (0, -\i - 0.3) {};
				\node[RED] (\i') at (0.25, -\i + 0.3) {};
			}
			{}
			
			\ifthenelse{\x = 0}
			{
				\node[RED] (\i) at (0, -\i - 0.3) {};
				\node[BLUE] (\i') at (0.25, -\i + 0.3) {};
			}
			{}
			
			\ifthenelse{\j = 0}
			{
				\pgfmathtruncatemacro\k{\i-1};
				\draw[black, thick] (\i)--(\k);
				\draw[black, thick] (\i')--(\k');		
			}
			{}			
		}
	\end{scope}
}
\def\vedgeCF#1#2{
	\begin{scope}[shift={#1}]
		\node[GREEN] (A) at (-0.3, 0) {};
		\node[BLUE] (B) at (0, -0.4) {};
		\draw[black, thick] (A)--(B);
		\foreach [evaluate={\j = int(mod(\i,2)); \x = int(mod(\i,3));}] \i in {1,...,#2}
		{
			\ifthenelse{\x = 1}
			{
				\node[RED] (\i) at (0, -\i - 0.3) {};
				\node[GREEN] (\i') at (0.25, -\i + 0.3) {};
			}
			{}
			
			\ifthenelse{\x = 2}
			{
				\node[GREEN] (\i) at (0, -\i - 0.3) {};
				\node[BLUE] (\i') at (0.25, -\i + 0.3) {};
			}
			{}
			
			\ifthenelse{\x = 0}
			{
				\node[BLUE] (\i) at (0, -\i - 0.3) {};
				\node[RED] (\i') at (0.25, -\i + 0.3) {};
			}
			{}
			
			\ifthenelse{\j = 0}
			{
				\pgfmathtruncatemacro\k{\i-1};
				\draw[black, thick] (\i)--(\k);
				\draw[black, thick] (\i')--(\k');		
			}
			{}			
		}
	\end{scope}
}
\def\vedgeL#1#2{
	\begin{scope}[shift={#1}]
		\foreach [evaluate={\j = int(mod(\i,2)); \x = int(mod(\i,3));}] \i in {1,...,#2}
		{
			
			\ifthenelse{\x = 1}
			{
				\node[RED] (\i) at (0, -\i + 0.3) {};
				\node[GREEN] (\i') at (0.25, -\i - 0.3) {};
			}
			{}
			
			\ifthenelse{\x = 2}
			{
				\node[BLUE] (\i) at (0, -\i + 0.3) {};
				\node[RED] (\i') at (0.25, -\i - 0.3) {};
			}
			{}
			\ifthenelse{\x = 0}
			{
				
				\node[GREEN] (\i) at (0, -\i + 0.3) {};
				\node[BLUE] (\i') at (0.25, -\i - 0.3) {};
			}
			{}
			
			\ifthenelse{\j = 0}
			{
				\pgfmathtruncatemacro\k{\i-1};
				\draw[black, thick] (\i)--(\k);
				\draw[black, thick] (\i')--(\k');		
			}
			{}			
		}
	\end{scope}
}
\def\grd@save@target#1{%
	\def\grd@target{#1}}
\def\grd@save@start#1{%
	\def\grd@start{#1}}
\tikzset{
	grid with coordinates/.style={
		to path={%
			\pgfextra{%
				\edef\grd@@target{(\tikztotarget)}%
				\tikz@scan@one@point\grd@save@target\grd@@target\relax
				\edef\grd@@start{(\tikztostart)}%
				\tikz@scan@one@point\grd@save@start\grd@@start\relax
				\draw[minor help lines] (\tikztostart) grid (\tikztotarget);
				\draw[major help lines] (\tikztostart) grid (\tikztotarget);
				\grd@start
				\pgfmathsetmacro{\grd@xa}{\the\pgf@x/1cm}
				\pgfmathsetmacro{\grd@ya}{\the\pgf@y/1cm}
				\grd@target
				\pgfmathsetmacro{\grd@xb}{\the\pgf@x/1cm}
				\pgfmathsetmacro{\grd@yb}{\the\pgf@y/1cm}
				\pgfmathsetmacro{\grd@xc}{\grd@xa + \pgfkeysvalueof{/tikz/grid with coordinates/major step}}
				\pgfmathsetmacro{\grd@yc}{\grd@ya + \pgfkeysvalueof{/tikz/grid with coordinates/major step}}
				\foreach \x in {\grd@xa,...,\grd@xb}
				\node[anchor=north] at (\x,\grd@ya) {\tiny \pgfmathprintnumber{\x}};
				\foreach \y in {\grd@ya,...,\grd@yb}
				\node[anchor=east] at (\grd@xa,\y) {\tiny \pgfmathprintnumber{\y}};
			}
		}
	},
	minor help lines/.style={
		help lines,
		step=\pgfkeysvalueof{/tikz/grid with coordinates/minor step}
	},
	major help lines/.style={
		help lines,
		line width=\pgfkeysvalueof{/tikz/grid with coordinates/major line width},
		step=\pgfkeysvalueof{/tikz/grid with coordinates/major step}
	},
	grid with coordinates/.cd,
	minor step/.initial=.1,
	major step/.initial=.5,
	major line width/.initial=0.7pt,
}
\title{Unit Disk Visibility Graphs\footnote{This work is supported by the {Czech Science Foundation}, project no.~{20-04567S}.}}
\author{Deniz A\u{g}ao\u{g}lu\thanks{Faculty of Informatics, Dept. of Computer Science, Masaryk University} \and Onur \c{C}a\u{g}{\i}r{\i}c{\i}\footnotemark[2]}
\begin{document}

	
	

	\maketitle
	
	\begin{abstract}
	We study unit disk visibility graphs, where the visibility relation between a pair of geometric entities is defined by not only obstacles, but also the distance between them.
	That is, two entities are not mutually visible if they are too far apart, regardless of having an obstacle between them.
	This particular graph class models real world scenarios more accurately compared to the conventional visibility graphs. We first define and classify the unit disk visibility graphs, and then show that the 3-coloring problem is NP-complete when unit disk visibility model is used for a set of line segments (which applies to a set of points) and for a polygon with holes.
	\end{abstract}

	\section{Introduction} \label{sec:intro}
	
	A visibility graph is a simple graph $G = (V,E)$ defined over a set $\mathcal{P} = \{p_1, \dots, p_n\}$ of $n$ geometric entities (points, vertices of a polygon, endpoints of a set of line segments)  where a vertex $u \in V$ represents a geometric entity $p_u \in \mathcal{P}$, and the edge $uv \in E$ exists if and only if $p_u$ and $p_v$ are mutually visible (or see each other).
	In the literature, visibility graphs were studied considering various geometric sets such as a simple polygon \cite{Rourke_artgallery}, a polygon with holes \cite{Wein_voronoi}, a set of points \cite{Cardinal_pointcomplexity}, a set of line segments \cite{Everett_planarsegment}, along with different visibility models such as line-of-sight visibility \cite{Garey_lineOfSight}, $\alpha$-visibility \cite{Ghodsi_alpha}, and $\pi$-visibility \cite{Urrutia_artGalleryAndIllum}.
	
	Visibility graphs are used to describe real-world scenarios majority of which concern the mobile robots and path planning \cite{Latombe_robotmotion,Berg_compgeo,Oommen_robotnavigation}.
	While modeling the environment in which the robots move, a very common tool is to interpret the geometric entities and the relations among them as visibility graphs \cite{ORourke_handbook,Aichholzer_convexifying,Overmars_newmethods,Floriani_visibilityonterrain}.
	
	However, the physical limitations of the real world are usually overlooked or ignored while using visibility graphs. Since no camera, sensor, or guard (the objects represented by vertices of the visibility graph) has infinite range, two objects might not sense each other even though there are no obstacles in-between. Based on such a limitation, we assume that if a pair of objects  are too far from each other, then they do not see each other.	To model this notion, we adapt the unit disk graph model.
	
	\begin{figure}[htbp]
		\centering
		\captionsetup[subfigure]{position=b,justification=centering}
		\begin{subfigure}[t]{0.24\linewidth}
			\begin{tikzpicture}[scale=2]
			\draw (0,0) to[grid with coordinates] (1.5,1);
			
			\tikzstyle{every node}=[draw=black, fill=black, shape=circle, minimum size=3pt,inner sep=0pt];
			
			\node (1) at (0,0) {};
			\node (2) at (0.5,0.6) {};
			\node (3) at (0.8,1) {};
			\node (4) at (1.5,0.5) {};		
			\node (5) at (1.4,0.1) {};
			\node (6) at (0.9,0.2) {};
			
			\tikzstyle{every path}=[red]
			
			\draw (1)--(2)--(3)--(4)--(5);
			\draw (2)--(4);
			
			\foreach \i in {1,...,5} 
			{
				\draw (\i)--(6);
			}
			
			\tikzstyle{every path}=[black, dashed]	
			\draw (1)--(4);
			\draw (1)--(5)--(2);
			\draw (3)--(5);
			\draw (1)--(3);
			
			\end{tikzpicture}
			\caption{}
			\label{fig:UDVGa}
		\end{subfigure}
		~
		\begin{subfigure}[t]{0.24\linewidth}
			\begin{tikzpicture}[scale=2]
			\draw (0,0) to[grid with coordinates] (1.5,1);
			
			\tikzstyle{every node}=[draw=black, fill=black, shape=circle, minimum size=3pt,inner sep=0pt];
			
			\node (1) at (0,0) {};
			\node (2) at (0.5,0.6) {};
			\node (3) at (0.8,1) {};
			\node (4) at (1.5,0.5) {};		
			\node (5) at (1.4,0.1) {};
			\node (6) at (0.9,0.2) {};
			
			\tikzstyle{every path}=[thick]
			\draw (1)--(2);
			\draw (3)--(6);
			\draw (4)--(5);
			
			\tikzstyle{every path}=[red]
			
			\draw (2)--(3)--(4);

			\foreach \i in {1,2,4,5} 
			{
				\draw (\i)--(6);
			}
			
			\tikzstyle{every path}=[black, dashed]	
			
			\draw (1)--(5);
			\draw (3)--(5);
			\draw (1)--(3);
			
			\end{tikzpicture}
			\caption{}
			\label{fig:UDVGb}
		\end{subfigure}
		~
		\begin{subfigure}[t]{0.24\linewidth}
			\begin{tikzpicture}[scale=2]
			\draw (0,0) to[grid with coordinates] (1.5,1);
			
			\tikzstyle{every node}=[draw=black, fill=black, shape=circle, minimum size=3pt,inner sep=0pt];
			
			\node (1) at (0,0) {};
			\node (2) at (0.5,0.6) {};
			\node (3) at (0.8,1) {};
			\node (4) at (1.5,0.5) {};		
			\node (5) at (1.4,0.1) {};
			\node (6) at (0.9,0.2) {};
			
			\fill[gray, opacity=0.3] (1.center)--(2.center)--(3.center)--(4.center)--(5.center)--(6.center)--(1.center);
			
			\tikzstyle{every path}=[thick]
			
			\draw (1)--(2)--(3)--(4)--(5)--(6)--(1);
			
			\tikzstyle{every path}=[red]
			
			\draw (2)--(4);
			\draw (2)--(6);
			\draw (3)--(6);
			\draw (4)--(6);
			
			\tikzstyle{every path}=[black, dashed]	
			
			\draw (1)--(4);
			\draw (2)--(5);
			\draw (3)--(5);
			
			\end{tikzpicture}
			\caption{}
			\label{fig:UDVGc}
		\end{subfigure}
		~
		\begin{subfigure}[t]{0.24\linewidth}
			\begin{tikzpicture}[scale=2]
			\draw (0,0) to[grid with coordinates] (1.5,1);
			
			\tikzstyle{every node}=[draw=black, fill=black, shape=circle, minimum size=3pt,inner sep=0pt];
			
			\node (1) at (0,0) {};
			\node (2) at (0.5,0.6) {};
			\node (3) at (0.8,1) {};
			\node (4) at (1.5,0.5) {};		
			\node (5) at (1.4,0.1) {};
			\node (6) at (0.9,0.2) {};
			
			\node (h1) at (0.5,0.4) {};
			\node (h2) at (0.8,0.7) {};
			\node (h3) at (1,0.5) {};
			\node (h4) at (0.7,0.4) {};
			
			\fill[gray, opacity=0.3] (1.center)--(2.center)--(3.center)--(4.center)--(h3.center)--(h2.center)--(h1.center)--(1.center);
			\fill[gray, opacity=0.3] (4.center)--(5.center)--(6.center)--(1.center)--(h1.center)--(h4.center)--(h3.center)--(4.center);		
			
			\tikzstyle{every path}=[thick]
			\draw (h1)--(h2)--(h3)--(h4)--(h1);
			
			\draw (1)--(2)--(3)--(4)--(5)--(6)--(1);
			
			\tikzstyle{every path}=[red]
			
			\draw (1)--(h1);
			\draw (1)--(h4);
			
			\draw (2)--(h1);
			\draw (2)--(h2);
			
			\draw (3)--(h2);
			\draw (3)--(h3);	
			
			\draw (4)--(h2);		
			\draw (4)--(h3);
			\draw (4)--(h4);
			\draw (4)--(6);
			
			\draw (5)--(h3);
			\draw (5)--(h4);
			
			\draw (6)--(h1);
			\draw (6)--(h3);
			\draw (6)--(h4);	
			
			\tikzstyle{every path}=[black, dashed]	
			
			\draw (1)--(4);
			\draw (1)--(h2);
			\draw (3)--(5);

			\end{tikzpicture}
			\caption{}
			\label{fig:UDVGd}
		\end{subfigure}
		\caption{Unit disk visibility relations of (a) a set of points, (b) a set of line segments, (c) a simple polygon, and (d) a polygon with a hole.}
		
		\label{fig:UDVG}
	\end{figure}
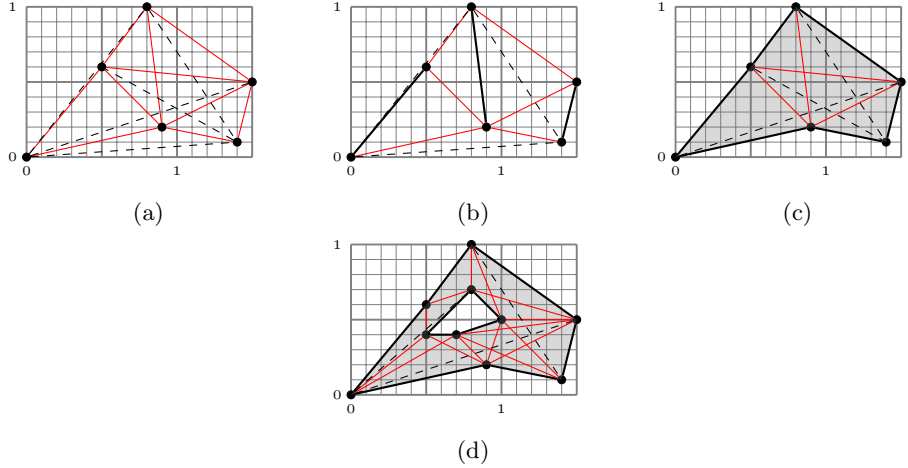
	
	$G$ is called a \emph{unit disk visibility graph} of $\mathcal{P}$ if the existence of an edge $uv \in E$ means that $p_u$ and $p_v$ see each other, and the Euclidean distance between them is less than 1 unit.
	In other words, an element of the geometric set cannot see another element if they are too far apart\footnote{In a unit disk intersection graph, there exists an edge between two vertices, if the corresponding disks intersect. This means that a distance $\leq 2$ is enough. However, our paper is based on wireless sensors and their ranges. For two sensor nodes to communicate, they must be inside each other's communication range. Thus, instead of the radius, we assume that the diameter of a unit disk is one unit.}.
	Unit disk point visibility graphs directly follows from this definition while for unit disk segment and polygon visibility graphs, the additional constraints are the followings: $i)$ the edges of unit disk segment visibility graphs cannot to intersect any segment, and $ii)$ the edges of unit disk polygon visibility graphs must totally lie inside the polygon.
	
	In this paper, we tackle the 3-coloring problem in unit disk visibility graphs.
	The 3-coloring problem is a famous combinatorial decision problem which asks if a graph has a \emph{proper 3-coloring} (or in short, \emph{3-coloring}).
	A graph is said to have a 3-coloring (or to be 3-colorable) when all the vertices receive one of the three pre-given colors, and no two adjacent vertices receive the same color.
	This problem is NP-complete for graphs in general \cite{chroComp}, and in this paper we show that it is also NP-complete on unit disk visibility graphs of a set of line segments, and a polygon with holes. We refer the reader to the Appendix for the full proofs of axioms marked with ~\apxmark. 
	
	\section{Preliminary results} \label{sec:classification}
	
	In this section, we first show that visibility graphs are a proper subclass of unit disk visibility graphs. We assume that the given geometric set is a set of points, since every visibility graph considered in this paper can be embedded in the Euclidean plane; points, endpoints of a set of line segments, and the vertices of a polygon.

		\subsection{The classification of the visibility graphs and the unit disk visibility graphs}

	\begin{lemma}\apxmark \label{lem:scaledown}
		Consider a set $P = \{p_1, \dots, p_n\}$ of points, and the visibility graph $G(P)$ of $P$.
		There exists an embedding $\Sigma(P)$ of $P$, such that the Euclidean distance between every pair $p,q \in P$ is less than one unit, preserving the visibility relations.
	\end{lemma}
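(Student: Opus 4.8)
The plan is to exploit the fact that visibility among a set of points is a purely ordinal (affine) notion: whether $p_i$ and $p_j$ see each other depends only on whether some third point $p_k\in P$ lies on the open segment $\overline{p_ip_j}$, and ``lying strictly between two points'' is preserved by every similarity of the plane. So I would take $\Sigma$ to be a homothety centered at the origin with a suitably small positive ratio $\lambda$, i.e.\ $\Sigma(P)=\{\lambda p_1,\dots,\lambda p_n\}$.

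Concretely, let $D=\max_{p,q\in P}\|p-q\|$ be the diameter of $P$ (if $|P|\le 1$ there is nothing to prove, so assume $D>0$), and pick any $\lambda\in(0,1/D)$. Since $\lambda>0$ the map $x\mapsto\lambda x$ is injective, so the scaled points remain pairwise distinct, and for every pair $\|\lambda p_i-\lambda p_j\|=\lambda\|p_i-p_j\|\le\lambda D<1$, which gives the required distance bound. It then remains to check $G(\Sigma(P))=G(P)$ as labeled graphs. If $p_k=t\,p_i+(1-t)\,p_j$ for some $t\in(0,1)$ then $\lambda p_k=t(\lambda p_i)+(1-t)(\lambda p_j)$ with the \emph{same} $t$, so $p_k$ blocks the pair $\{p_i,p_j\}$ in $P$ iff $\lambda p_k$ blocks $\{\lambda p_i,\lambda p_j\}$ in $\Sigma(P)$ (the converse direction uses the inverse homothety with ratio $1/\lambda$). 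Hence the blocking relation, and therefore the edge set, is identical in the two configurations, so visibility is preserved.

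The only point that needs care — and it is a matter of bookkeeping rather than a genuine obstacle — is fixing the precise visibility convention for a point set (a third point blocks a pair only when it is \emph{strictly} interior to the closed segment joining them, and non-collinear triples never block) and then observing that this convention is manifestly invariant under similarities. With that convention the two paragraphs above constitute a complete argument; under any other reasonable convention only the verification in the last step changes, and it changes only cosmetically, since homotheties commute with all of the relevant affine predicates (collinearity, betweenness, coincidence).
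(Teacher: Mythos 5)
Your proposal is correct and takes essentially the same route as the paper: both arguments apply a uniform scaling (homothety centered at the origin) with a ratio small enough to bring the whole configuration inside a disk of diameter one, and then observe that the scaling preserves the visibility relation. Your justification of that last step via invariance of betweenness/collinearity under homotheties is in fact a bit more careful than the paper's appeal to unchanged slopes, but the underlying idea is identical.
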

	\begin{proof}
		Let $\Phi = \{x_1, y_1, x_2, y_2, \dots, x_n, y_n\}$ be the set of all coordinates used to represent the set $P$, and let $\varphi \in \Phi$ be a coordinate whose absolute value is the largest number in $\Phi$.
		
		The circle $C$ with center $(0,0)$ and radius $\varphi$ contains all the points in $P$. 
		Now imagine we shrink $C$ into a unit circle, and scale the point set accordingly.
		In order to do that, we divide every coordinate in $\Phi$ by $2\varphi$. 
		That is, the new coordinates for the points are $(x_1/2\varphi, y_1/2\varphi), \dots, (x_n/2\varphi, y_n/2\varphi)$.
		
		Considering any pair $s_i, s_j \in S$, and the straight line $\ell(i,j)$ passing through these points, the slope of $\ell$ does not change. 
		Therefore, the visibility relations are preserved for $P$.
	\end{proof}
	
	By Lemma~\ref{lem:scaledown}, a given set $P$ of points can be scaled down to obtain $P'$ so that every point in $P'$ is inside a unit circle, and the visibility graph $G(P)$ of $P$ is exactly the same as the visibility graph $G(P')$ of $P'$.
	Thus, the unit disk visibility graph of $P'$ is also isomorphic to $G(P')$ since no pair of points in $P'$ has Euclidean distance greater than one unit. We easily get the following.
	
	\begin{lemma}\apxmark \label{lem:hardUDVG}
		If a problem $\mathfrak{Q}$ is NP-hard for point visibility graphs, then $\mathfrak{Q}$ is also NP-hard for unit disk point visibility graphs.
	\end{lemma}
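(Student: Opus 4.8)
The plan is to derive the lemma as an immediate consequence of Lemma~\ref{lem:scaledown}, by exhibiting a polynomial-time many-one reduction from $\mathfrak{Q}$ restricted to point visibility graphs to $\mathfrak{Q}$ restricted to unit disk point visibility graphs. An instance of $\mathfrak{Q}$ on point visibility graphs is (an encoding of) a finite point set $P$ with rational coordinates, interpreted through its visibility graph $G(P)$, together with whatever auxiliary data $\mathfrak{Q}$ requires. The reduction leaves the auxiliary data untouched and replaces $P$ by the rescaled point set $P' := \Sigma(P)$ furnished by Lemma~\ref{lem:scaledown}.

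I would carry this out in four steps. First, given $P$, compute $\varphi$ (a coordinate of largest absolute value) and form $P'$ by dividing every coordinate by $2\varphi$; this is a single pass over the input. Second, invoke Lemma~\ref{lem:scaledown} to get $G(P') = G(P)$, so the map $P \mapsto P'$ preserves the visibility graph. Third, note that by the same lemma every pair of points of $P'$ is at Euclidean distance less than $1$, hence each visibility edge of $P'$ is also an edge of the unit disk point visibility graph and conversely, so $\UDVGpnt(P')$ coincides with $G(P') = G(P)$. Fourth, conclude that $P$ is a yes-instance of $\mathfrak{Q}$ on point visibility graphs if and only if $P'$ is a yes-instance of $\mathfrak{Q}$ on unit disk point visibility graphs; composing with any reduction witnessing NP-hardness of $\mathfrak{Q}$ on point visibility graphs and using transitivity of polynomial-time reductions yields the claim.

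The only point that needs a word of care is the first step: that the reduction runs in polynomial time. This amounts to checking that rescaling does not blow up the bit-complexity of the coordinates, which it does not, since dividing all coordinates by the single common value $2\varphi$ only extends their binary representations by an amount bounded in terms of the size of $\varphi$, and $\varphi$ already appears in the input. I do not expect a genuine obstacle here: apart from this bookkeeping, the statement is essentially a restatement of the paragraph preceding it, and everything follows directly from Lemma~\ref{lem:scaledown}.
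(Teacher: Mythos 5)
Your proposal is correct and takes essentially the same route as the paper: rescale $P$ via Lemma~\ref{lem:scaledown} so that the visibility graph and the unit disk point visibility graph of the rescaled set coincide, then observe the map is a polynomial-time reduction. The one place the paper is more careful is exactly the bookkeeping step you flag: it divides by $2M$, where $M$ is the smallest power of two exceeding $|\varphi|$, rather than by $2\varphi$ itself, because dividing by an arbitrary $\varphi$ need not yield a terminating binary expansion (your claim that the binary representations are only ``extended'' is not literally true, e.g.\ for $\varphi = 3$), whereas dividing by a power of two only appends $O(\log M)$ digits.
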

	\begin{proof}
		Let $P = \{p_1, \dots, p_n\}$ be a geometric set (points, vertices of a polygon, endpoints of a set of line segments) in the Euclidean plane, with coordinates $(x_1, y_1)$, $\dots$, $(x_n, y_n)$, respectively.
		Let $\mathcal{A}$ be an algorithm that solves an instance $\mathcal{Q}$ of the problem $\mathfrak{Q}$ on $P$, in polynomial time.
		
		As shown in Lemma~\ref{lem:scaledown}, $P$ can be scaled down into a unit circle, preserving the relations.
		To avoid the possible coordinates with exponentially many digits, instead of making the transformation $(x_i, y_i) \to (x_i/2\varphi, y_i/2\varphi)$ for each $(x_i, y_i)$, we let $M$ be the smallest integer larger than $|\varphi|$, which can be expressed as $2^k$.
		Since dividing a number $n$ by $2^k$ requires at most $k = O(\log n)$ many digits, applying the transformation $(x_i, y_i) \to (x_i/2M, y_i/2M)$ does not create coordinates with exponentially many digits.
		
		So, let $P' = \{p'_1, \dots, p'_n\}$ be a point set with coordinates $(x_1/2M, y_1/2M), \dots, (x_n/2M,$ $y_n/2M)$, respectively.
		If $\mathcal{A}$ solves the problem $\mathfrak{Q}$ on $P'$ in polynomial time, then $\mathcal{A}$ solves $\mathfrak{Q}$ on $P$ also in polynomial time.
		
		Therefore, any $\mathfrak{Q}$ is also NP-hard for unit disk point visibility graphs.
	\end{proof}
	
	\begin{rem}
		By Lemma~\ref{lem:hardUDVG}, the maximum dominating set, the minimum vertex cover, the maximum independent set and the minimum dominating set problems which have been shown to be NP-hard for visibility graphs by \cite{Lin_complexityaspects, LeeLin_artGallery} are also NP-hard for unit disk visibility graphs.
	\end{rem}

	\begin{figure}[htbp]
		\captionsetup[subfigure]{position=b,justification=centering}
		\centering
		\begin{subfigure}[b]{0.23\linewidth}
			\centering
			\begin{tikzpicture}[scale=0.8]
			\tikzstyle{every node}=[draw=black, fill=gray, shape=circle, minimum size=3pt,inner sep=0pt];
			\node[fill=red] (center) at (0,0) {};

			\foreach \i in {1,...,11} {
				\coordinate (\i) at (\i*360/11:1cm);
			}
			
			\draw (1) -- (11);
			\draw (center)--(11);
			\foreach \i in {3,...,11} {
				\pgfmathtruncatemacro\j{\i-1};
				\draw (\i) -- (\j);
				\draw (center)--(\j);
			}
			
			\foreach \i in {1,2,4,6,8,10} \draw[thick, red] (center)--(\i);
			
			\foreach \i in {3,5,7,9,11} \node at (\i) {};
			
			\tikzstyle{every node}=[draw=red, fill=red, shape=circle, minimum size=3pt,inner sep=0pt];
			\foreach \i in {1,2,4,6,8,10} \node at (\i) {};
			
			\end{tikzpicture}
			\caption{}
			\label{fig:2a}
		\end{subfigure}
		~
		\begin{subfigure}[b]{0.23\linewidth}
			\centering
			\begin{tikzpicture}[scale=3]
			\node[draw=none, fill=none, rotate=26] at (0.78,0) {\tiny $1$ unit};
			
			\node[draw=none, fill=none] at (0.5,-0.07) {\small $u$};
			
			\tikzstyle{every node}=[draw=black, fill=gray, shape=circle, minimum size=3pt,inner sep=0pt];
			\draw[<->] (0.55,-0.05) -- (0.985,0.145);
			
			\node (1) at (0,0.2) {};
			\node (2) at (0.1,0.2) {};
			\node (3) at (0.2,0.2) {};
			\node (4) at (0.3,0.2) {};
			\node (5) at (0.4,0.2) {};
			\node (6) at (0.5,0.2) {};
			\node (7) at (0.6,0.2) {};
			\node (8) at (0.7,0.2) {};
			\node (9) at (0.8,0.2) {};
			\node (10) at (0.9,0.2) {};
			
			\node (11) at (1,0.2) {};
			
			\node (12) at (0.5,0) {};
			
			\foreach \i in {1,...,10}
			{
				\pgfmathtruncatemacro\j{\i+1};
				\draw (\i) -- (\j);
				\draw (\i) -- (12);
			}
			
			\draw (11) -- (12);
			
			\foreach \i in {1,3,5,7,9,11} \draw[thick, red] (12)--(\i);
			
			\end{tikzpicture}
			\caption{}
			\label{fig:2b}
		\end{subfigure}
		~
		\begin{subfigure}[b]{0.23\linewidth}
			\centering
			\begin{tikzpicture}[scale=3]
			\node[draw=none, fill=none, rotate=30] at (0.75,0) {\tiny $1$ unit};
			
			\node[draw=none, fill=none] at (0.4,-0.2) {\small $u$};
			\node[draw=none, fill=none] at 
			(0.4,-0.03) {\small $v$};
			
			\tikzstyle{every node}=[draw=black, fill=gray, shape=circle, minimum size=3pt,inner sep=0pt];
			\draw[<->] (0.55,-0.05) -- (0.95,0.15);
			
			\node (1) at (0,0.2) {};
			\node (2) at (0.1,0.2) {};
			\node (3) at (0.2,0.2) {};
			\node (4) at (0.3,0.2) {};
			\node (5) at (0.4,0.2) {};
			\node (6) at (0.5,0.2) {};
			\node (7) at (0.6,0.2) {};
			\node (8) at (0.7,0.2) {};
			\node (9) at (0.8,0.2) {};
			\node (10) at (0.9,0.2) {};
			\node (11) at (0.5,0) {};
			\node (12) at (0.5,-0.2) {};
			
			\foreach \i in {1,3,5,7,9,11}
			{
				\pgfmathtruncatemacro\j{\i+1};
				\draw[ultra thick] (\i) -- (\j);
				\draw (\i) -- (11);
				\draw (\j) -- (11);
			}
			
			\foreach \i in {2,4,6,8}
			{
				\pgfmathtruncatemacro\j{\i+1};
				\draw (\i) -- (\j);
			}
			
			\draw (10) -- (11);
			
			\foreach \i in {1,3,5,7,9,12} \draw[thick, red] (11)--(\i);
			
			\end{tikzpicture}
			\caption{}
			\label{fig:2c}
		\end{subfigure}
		~
		\begin{subfigure}[b]{0.23\linewidth}
			\centering
			\begin{tikzpicture}[scale=3]
			\node[draw=none, fill=none, rotate=45] at (0.8,0.15) {\tiny $1$ unit};
			
			\node[draw=none, fill=none] at (0.5,-0.07) {\small $u$};
			
			\tikzstyle{every node}=[draw=black, fill=gray, shape=circle, minimum size=3pt,inner sep=0pt];
			\draw[<->] (0.55,-0.05) -- (1.04,0.475);
			
			\node (1) at (0,0.5) {};
			\node (2) at (0.1,0.46) {};
			\node (3) at (0.2,0.425) {};
			\node (4) at (0.3,0.395) {};
			\node (5) at (0.4,0.375) {};
			\node (6) at (0.5,0.375) {};
			\node (7) at (0.6,0.395) {};
			\node (8) at (0.7,0.425) {};
			\node (9) at (0.8,0.46) {};
			\node (10) at (0.9,0.5) {};
			
			\node (11) at (1,0.54) {};
			
			\node (12) at (0.5,0) {};
			
			\foreach \i in {1,...,11}
			{
				\pgfmathtruncatemacro\j{\i+1};
				\draw[ultra thick] (\i) -- (\j);			
			}
			\draw[ultra thick] (1) -- (12);			
			\foreach \i in {2,...,10}
			{
				\draw (\i)--(12);
			}
			
			\foreach \i in {1,3,5,7,9,11} \draw[thick, red] (12)--(\i);
			
			\end{tikzpicture}
			\caption{}
			\label{fig:2d}
		\end{subfigure}
		
		\caption{\textsc{(a)} A graph with an induced $K_{1,6}$ (indicated with red edges). Unit disk visibility graphs for \textsc{(b)} a set of points, \textsc{(c)} a set of segments, and \textsc{(d)} a simple polygon, each containing an induced $K_{1,6}$.}
		\label{fig:notUD}
	\end{figure}
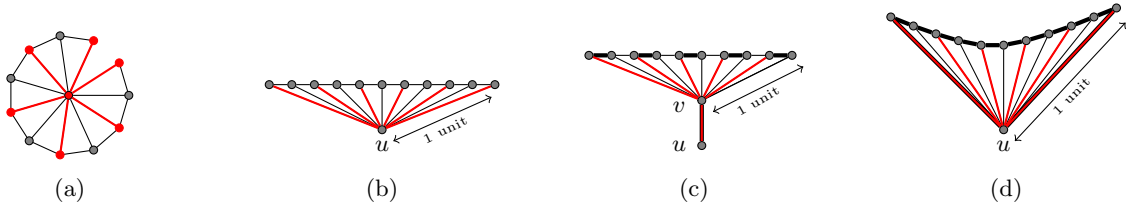			

	We now prove the classification of the unit disk graphs and the unit disk visibility graphs.

		\subsection{The classification of the unit disk graphs and the unit disk visibility graphs}

	\begin{lemma}\apxmark \label{lem:K1,6}
		Unit disk visibility graphs are not a subclass of unit disk graphs. 
	\end{lemma}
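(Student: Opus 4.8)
The plan is to prove the non-inclusion by exhibiting a single graph that lies in the class of unit disk visibility graphs but not in the class of unit disk graphs; the witness will be the star $K_{1,6}$. The two ingredients are: (i) $K_{1,6}$ is \emph{not} a unit disk graph, and (ii) $K_{1,6}$ occurs as an \emph{induced} subgraph of a unit disk visibility graph. Because the class of unit disk graphs is hereditary (restricting a disk representation to a subset of the vertices still represents the induced subgraph), (i) and (ii) together finish the proof: any unit disk visibility graph containing an induced $K_{1,6}$ cannot be a unit disk graph. It suffices to produce one such graph, but I would do it in all three geometric models, matching Figure~\ref{fig:notUD}.

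First I would dispatch (i) by the standard angular pigeonhole argument. Suppose $K_{1,6}$ were a unit disk graph with center $c$ and leaves $v_1,\dots,v_6$; with the paper's unit-diameter convention this means $|cv_i|\le 1$ for all $i$ while $|v_iv_j|>1$ for $i\neq j$ (so in particular every $v_i\neq c$). The six rays from $c$ toward $v_1,\dots,v_6$, sorted cyclically, have angular gaps summing to $2\pi$, so some pair satisfies $\angle v_icv_j\le \pi/3$, hence $\cos\angle v_icv_j\ge \tfrac{1}{2}$. Writing $a=|cv_i|$ and $b=|cv_j|$, the law of cosines yields $|v_iv_j|^2=a^2+b^2-2ab\cos\angle v_icv_j\le a^2-ab+b^2\le\max(a,b)^2\le 1$, where the middle step uses $b(a-b)\ge 0$ for $a\ge b\ge 0$. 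This contradicts $|v_iv_j|>1$.

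Next I would construct the realizations for (ii). For a set of points, place the "leaves" $v_1,\dots,v_6$ on a common line $\ell$, insert one auxiliary blocker point strictly between each consecutive pair, and put an "apex" $c$ off $\ell$; then every segment $\overline{cv_i}$ meets $\ell$ only at $v_i$, so $c$ sees all leaves, whereas every $\overline{v_iv_j}$ passes through an intermediate collinear point of the set, so no two leaves see each other — the induced subgraph on $\{c,v_1,\dots,v_6\}$ is exactly $K_{1,6}$. Invoking Lemma~\ref{lem:scaledown} I then rescale the whole configuration into a disk of diameter $<1$; this preserves all point-visibility relations and makes every pairwise distance $<1$, so the unit disk point visibility graph equals the point visibility graph and still has the induced $K_{1,6}$. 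The segment model uses the same layout, replacing each blocker point by a short obstacle segment incident to a leaf and lying along $\ell$ (and attaching a tiny segment to $c$ so it is an honest endpoint): the apex, off $\ell$, still sees every leaf, while consecutive — hence all — leaves are cut off by an obstacle on $\ell$. The polygon model uses a shallow "fan": a convex circular arc through $v_1,\dots,v_6$ closed up by two edges meeting at an apex $c$, with the curvature chosen so that every chord $\overline{v_iv_j}$ lies outside the polygon (the boundary bulges past it), so no two arc vertices see each other inside the polygon while the apex sees them all; again rescale. These are the three configurations drawn in Figure~\ref{fig:notUD}(b)–(d).

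The part that needs care rather than cleverness is verifying the two visibility requirements \emph{simultaneously} — apex-sees-every-leaf and no-leaf-sees-another — while also respecting the distance bound. The distance bound is handled uniformly by the rescaling of Lemma~\ref{lem:scaledown}, so the work reduces to the purely combinatorial-geometric checks: collinearity forces the needed non-visibilities in the point and segment models, and the local convexity of the boundary forces them in the polygon model, while the apex is kept in the kernel so that it sees all leaves. The remaining bookkeeping is to confirm that none of the auxiliary blockers, obstacle segments, or extra polygon vertices introduces an unwanted edge among the seven distinguished vertices. Once that is checked, each constructed unit disk visibility graph has an induced $K_{1,6}$, and is therefore not a unit disk graph, which is the assertion of the lemma.
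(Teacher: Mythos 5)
Your proposal is correct and takes essentially the same approach as the paper: both use an induced $K_{1,6}$ as the separating witness, realized by collinear/blocking configurations in each geometric model, combined with the fact that unit disk graphs exclude induced $K_{1,6}$. The only differences are cosmetic — you prove the $K_{1,6}$-freeness of unit disk graphs via the angular pigeonhole argument and make the hereditary property explicit, whereas the paper cites this fact from the literature and leaves heredity implicit.
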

	\begin{proof}
		Unit disk graphs cannot contain an induced $K_{1,6}$ \cite{Marathe_heuristics} while a unit disk visibility graph of a set of points, a set of line segments, and a simple polygon can contain it.
		In Figure~\ref{fig:2a}, we have a graph which contains $K_{1,6}$ as an induced subgraph shown with red vertices and red edges. Our purpose is to show that an induced $K_{1,6}$ can be contained in the unit disk visibility graph of a set of points, a set of line segments and a simple polygon. In Figure~\ref{fig:2b}, we are given a set of twelve points where all of them except $u$ are collinear. Observe that among the collinear points, we can choose at most six pairwise non-adjacent points to be in our induced subgraph, otherwise at least two of them see each other. With the addition of $u$, we obtain an induced $K_{1,6}$ since the distance between $u$ and the farthest point is one unit. In Figure~\ref{fig:2c}, we have a similar configuration where we can choose at most five pairwise non-adjacent endpoints among the collinear segments. By adding both endpoints $u$ and $v$, we obtain an induced $K_{1,6}$ since $u$ does not see other points than $v$ due to distance and $v$ sees all endpoints. In Figure~\ref{fig:2d}, we also have a similar configuration to Figure~\ref{fig:2b}. The vertices of a polygon do not see each other if the straight line joining them is outside of the polygon. Therefore, among the vertices except $u$, we can choose at most six pairwise non-adjacent points to be in our induced subgraph. With the addition of $u$, we obtain an induced $K_{1,6}$.
	\end{proof}
	
	The idea used to prove Lemma~\ref{lem:K1,6} is that unit disk graphs cannot contain an induced $K_{1,6}$ \cite{Marathe_heuristics} while unit disk point, segment and polygon visibility graphs can contain it as in Figure~\ref{fig:notUD}. 
	
	
	\begin{lemma}\apxmark \label{lem:UnitUDVG}
		Unit disk graphs are a proper subclass of unit disk point visibility graphs, and \emph{not} a subclass of unit disk segment and polygon (simple or with holes) visibility graphs.
	\end{lemma}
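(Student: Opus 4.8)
\emph{Proof plan.} The plan is to prove the two assertions separately. For the first, I would take an arbitrary unit disk representation of a graph $G$ (vertices placed as points, $uv\in E(G)$ exactly when the two points are at distance below the threshold), rescale it slightly so that no pair of points sits exactly at distance $1$ — that is, pass from a closed-disk to an open-disk representation — and then perturb the points generically so that, in addition, no three of them are collinear; since there are only finitely many pairs and each pairwise distance is now strictly on one side of $1$, no adjacency changes. In the perturbed point set no point lies strictly between two others, so every two points are mutually visible and the unit disk point visibility graph of the set coincides with its unit disk graph, namely $G$. Thus unit disk graphs are a subclass of unit disk point visibility graphs, and the inclusion is strict: as in the proof of Lemma~\ref{lem:K1,6}, the point set of Figure~\ref{fig:2b} realizes a unit disk point visibility graph containing an induced $K_{1,6}$, which no unit disk graph does~\cite{Marathe_heuristics}.

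For the second assertion I would exhibit a single unit disk graph that is simultaneously excluded from all three classes, namely the claw $K_{1,3}$; it is a unit disk graph (place the centre at the origin and the three leaves at distance $0.6$, mutually $120^\circ$ apart, so that leaf--leaf distances equal $0.6\sqrt3>1$ while each centre--leaf distance is $0.6<1$). It suffices to check that $K_{1,3}$ is not even a segment, a simple-polygon, or a polygon-with-holes visibility graph in the ordinary (distance-unconstrained) sense, since imposing the unit-distance requirement only shrinks each class. (i) The boundary cycle of a simple polygon is a Hamiltonian cycle of its visibility graph, so that graph is Hamiltonian; a tree such as $K_{1,3}$ is not. (ii) If $K_{1,3}$ were the visibility graph of a polygon with holes, the polygon would have exactly four vertices; since its outer boundary and any hole are simple polygons with at least three vertices each, all four vertices lie on the outer boundary, which is then a quadrilateral whose four edges form a $C_4$ in the visibility graph — impossible, as the tree $K_{1,3}$ contains no cycle. (iii) In a segment visibility graph the two endpoints of each segment are mutually visible, so the $m$ segments induce a perfect matching inside the edge set; but every edge of $K_{1,3}$ meets the centre, so $K_{1,3}$ has no perfect matching. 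Hence $K_{1,3}$ belongs to none of the three classes, which proves the claim.

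The substantive steps are few: the perturbation argument in the first part (standard once the open/closed-disk convention is pinned down) and, in the second part, the case analysis ruling out a four-vertex polygon with holes — there the only thing to notice is that consecutive boundary vertices are always mutually visible, which forces the spurious $C_4$, after which everything is determined. I expect the polygon-with-holes case to require the most care, mainly to make sure that no degenerate distribution of the four vertices (for instance a two-vertex \emph{slit} playing the role of a hole) is admitted under the definition in force.
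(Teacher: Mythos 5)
Your proof of the first assertion is essentially the paper's: perturb a unit disk representation into general position so that the unit disk graph and the unit disk point visibility graph of the resulting point set coincide, and get properness from the induced $K_{1,6}$ of Lemma~\ref{lem:K1,6}. That half is fine (your extra care about pairs at distance exactly $1$ is a sensible refinement). The second half, however, has a genuine gap, located in the sentence ``imposing the unit-distance requirement only shrinks each class.'' For a \emph{fixed} configuration the unit disk visibility graph is an edge-subgraph of the ordinary visibility graph, but that does not place the \emph{class} of unit disk segment (or polygon) visibility graphs inside the ordinary class: deleting the long edges can produce graphs that are not ordinary visibility graphs at all. This is exactly the phenomenon the paper's conclusion records when it notes that every segment visibility graph is Hamiltonian while unit disk segment visibility graphs need not be.

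Concretely, your universal witness fails: $K_{1,3}$ \emph{is} a unit disk segment visibility graph. Take the two segments with endpoints $(0,0),(0,-0.9)$ and $(-0.7,0.5),(0.7,0.5)$. The only endpoint pairs at distance less than one unit are $(0,0)$ together with each of the other three, and none of the corresponding sight lines is blocked, so the resulting graph is precisely the claw centred at $(0,0)$. Your perfect-matching obstruction evaporates because the two endpoints of a segment of length at least one unit are \emph{not} adjacent. The simple-polygon case leaks in the same way: for the convex kite $v_1v_2v_3v_4$ with $v_1=(0,0)$, $v_3=(0,0.9)$, and $v_2,v_4$ at distance $0.9$ from $v_1$ at angles of $70^\circ$ on either side of $v_1v_3$, the three distances from $v_1$ are below one unit and all other pairwise distances exceed it, so the unit disk visibility graph of this quadrilateral is again $K_{1,3}$; the boundary is not a Hamiltonian cycle of the unit disk visibility graph because two of its sides are too long. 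Only your polygon-with-holes argument survives as stated, since it rests on a vertex count rather than on edges that the distance constraint might delete. The paper instead uses obstructions for each class that do not route through the distance-unconstrained setting (parity of the number of endpoints for segments, and counting and cycle arguments for polygons); to repair your version you would need either such robust invariants or different witness graphs for the segment and simple-polygon cases.
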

	\begin{proof}
		Given a representation of unit disk graphs, we can simply perturb the disk centers slightly to obtain a set of points in general position, creating a configuration in which no three points are collinear \cite{Fonesca_recognition}.
		This way, we have had obtained a setting that is exactly a unit disk graph, in which a pair of vertices are adjacent only if they are close enough. 
		It is still a question whether the new positions of the disk centers can be represented by using polynomially many decimal digits with respect to the input size. However, this shows that every unit disk graph can be represented as the unit disk visibility graph of the new set of points. Lemma~\ref{lem:UnitUDVG} shows that there are unit disk visibility graphs of a set of points which cannot be recognized as unit disk graphs. Therefore, unit disk graphs are a ``proper'' subclass of unit disk visibility graphs for a set of points.
		
		By definition, a set of $n$ (disjoint) line segments contains $2n$ endpoints. Therefore, unit disk visibility graphs for a set of segments have even number of vertices while there unit disk graphs can have odd number of vertices. Combined with Lemma~\ref{lem:UnitUDVG}, unit disk graphs are neither a subclass nor a superclass of unit disk visibility graphs of a set of line segments.
		
		Unit disk visibility graphs for a simple polygon contain a Hamiltonian cycle which acts as the border of the given polygon. However, unit disk graphs may not contain an Hamiltonian cycle. Combined with Lemma~\ref{lem:UnitUDVG}, unit disk graphs are neither a subclass nor a superclass of unit disk visibility graphs for a simple polygon.
		
		Unit disk visibility graphs for a polygon with holes contain one cycle $C$ to act as the exterior border of the given polygon and at least one other cycle $C'$ disjoint from $C$ to act as the border of a hole. By definition, a cycle of a simple graph is of length at least three and this means that unit disk visibility graphs for a polygon with holes have at least $6$ vertices. Therefore, unit disk graphs of order $< 6$ are not unit disk visibility graphs for a polygon with holes. We now show that unit disk graphs of order $\geq 6$ are not unit disk visibility graphs for a polygon with holes using the following fact: \emph{The cycle $C'$ corresponding to the border of a hole is an induced ``chordless'' cycle of the graph.} However, unit disk graphs may not contain any induced cycle of length $\geq 3$. Combined with Lemma~\ref{lem:UnitUDVG}, unit disk graphs are neither a subclass nor a superclass of unit disk visibility graphs for a polygon with holes.	
	\end{proof}

	\section{Main results} \label{sec:3coloringSegment}
	
	In this section, we mention our NP-hardness reductions. A polynomial-time (NP-hardness) reduction from a (NP-hard) problem $\mathfrak{Q}$ to another problem $\mathfrak{P}$ is to map any instance $\Phi$ of $\mathfrak{Q}$ to some instance $\Psi$ of $\mathfrak{P}$ such that $\Phi$ is a \textsc{YES}-instance of $\mathfrak{Q}$ if and only if $\Psi$ is a \textsc{YES}-instance of $\mathfrak{P}$, in polynomial-time and polynomial-space.
	We first show that the 3-coloring problem for unit disk segment visibility graphs is NP-hard, using a reduction from the \emph{Monotone not-all-equal 3-satisfiability} (Monotone NAE3SAT) problem which is a variation of 3SAT \cite{Schaefer_complexitySAT} with no negated variables, and to satisfy the circuit, at least one variable must be true, and at least one variable must be false in each clause. 
	
	\def\polyV#1#2{%
		\begin{scope}[shift={#1}, rotate=#2]
			
			\coordinate (1) at (0,0);
			\coordinate (2) at (0.6, -0.8);
			\coordinate (3) at (0.8, -0.6);
			\coordinate (4) at (1.4, 0);
			\coordinate (5) at (0.8, 0.6);
			\coordinate (6) at (0.6, 0.8);
			\coordinate (7) at (0, 1.4);
			\coordinate (8) at (-0.6, 0.8);
			\coordinate (9) at (-0.8, 0.6);
			\coordinate (10) at (-1.4, 0);
			\coordinate (11) at (-0.8, -0.6);
			\coordinate (12) at (-0.6, -0.8);
			
			\draw[thick] (1)--(2);
			\draw[thick] (3)--(4)--(5);
			\draw[thick] (6)--(7)--(8);
			\draw[thick] (9)--(10)--(11);
			\draw[thick] (12)--(1);

			\fill[gray,opacity=0.3] (1) \foreach \i in {2,...,12}
			{
				--(\i)
			};		
			
		\end{scope}
	}
	
	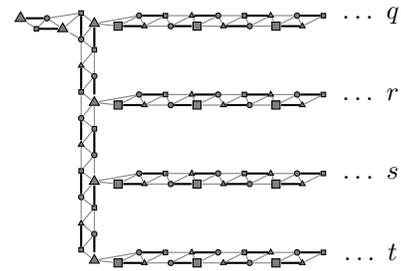
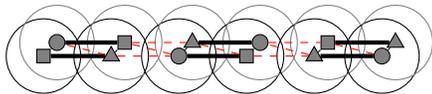
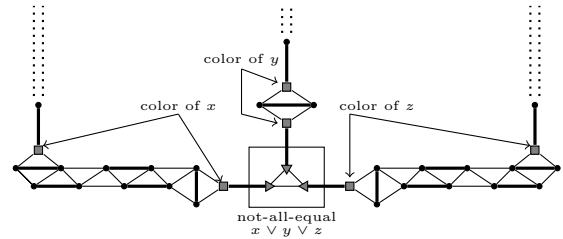
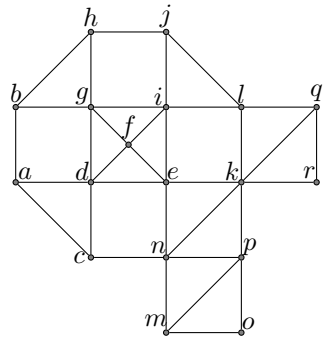
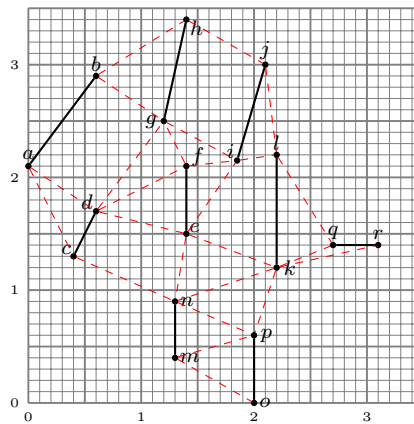
\begin{figure} [htbp]
		\captionsetup[subfigure]{position=b}
		\centering
		\begin{subfigure}[t]{0.48\linewidth}
			\centering
			\begin{tikzpicture}[yscale=0.43,
			triangle/.style = {regular polygon, regular polygon sides=3, scale=1.2},
			square/.style = {regular polygon, regular polygon sides=4, scale=1.2}]
			\tikzstyle{every node}=[draw, fill=black, minimum size=4pt,inner sep=0pt];
			
			\node[color=black, draw=none, fill=none] at (0,4) {$q$};
			\node[color=black, draw=none, fill=none] at (0,3) {$r$};
			\node[color=black, draw=none, fill=none] at (0,2) {$s$};
			\node[color=black, draw=none, fill=none] at (0,1) {$t$};
			
			\foreach \i in {1,2,3,4}
			{
				\draw (0.2,\i)--(6,\i);	
				\foreach \j in {0.5, 1.1, 1.7, 2.3, 2.9, 3.5, 4.1, 4.7, 5.3}
				{
					\node[color=gray,fill=gray!50, draw=gray, square] at (\j,\i) {};
					\node[color=gray,fill=gray!50, triangle] at (\j + 0.2, \i) {};
					\node[color=gray,fill=gray!50, shape=circle] at (\j + 0.4, \i) {};
				}	
			}
			
			\clause{0.5}{4}{2}{1}
			\clause{2.3}{4}{3}{1}
			\clause{4.1}{3}{2}{1}
			
			\tikzstyle{every node}=[draw, shape=rectangle, color=black, fill=none, minimum size=5pt, minimum width=4em, inner sep=2pt];
			
			\node at (1.1,-0.4) {$A$};
			\node at (2.9,-0.4) {$B$};
			\node at (4.7,-0.4) {$C$};
			\end{tikzpicture}
			\caption{ An NAE3SAT formula with variables $q,r,s,t$, and clauses 
				$A = (q \vee s \vee t)$,
				$B = (q \vee r \vee t)$, and 
				$C =(r \vee s \vee t)$.}
			\label{fig:3satCircuit}
		\end{subfigure}
		~
		\begin{subfigure}[t]{0.48\linewidth}
			\centering
			\begin{tikzpicture} [scale=0.35,
			triangle/.style = {regular polygon, regular polygon sides=3, scale=1.2},
			square/.style = {regular polygon, regular polygon sides=4, scale=1.2}]
			
			\node[draw=none, fill=none] at (11,-0.9) {$\dots\ q$};
			\node[draw=none, fill=none] at (11,-3.9) {$\dots\ r$};
			\node[draw=none, fill=none] at (11,-6.9) {$\dots\ s$};
			\node[draw=none, fill=none] at (11,-9.9) {$\dots\ t$};
			
			\tikzstyle{every node}=[draw=black, fill=gray, shape=circle, minimum size=2pt,inner sep=0pt];
			\tikzstyle{every path}=[draw, gray];

			\node[triangle,scale=2] (A) at (-2.3,-1) {};
			\node (B) at (-1.3,-1) {};
			\node[square] (C) at (-1.7,-1.4) {};
			\node[triangle,scale=1.8] (D) at (-0.7, -1.4) {};
			\draw[black, thick] (A)--(B);
			\draw (A)--(C);
			\draw (B)--(C);
			\draw[black, thick] (C)--(D);

			\foreach [evaluate={\j=int(mod(\i,3));}] \i in {1,...,10}
			{
				\ifthenelse{\j = 1}
				{
					\node[square] (\i) at (0, -\i+0.2) {};
					\node[triangle,scale=1.8] (\i') at (0.5, -\i-0.2) {};
				}
				{}
				
				\ifthenelse{\j = 2}
				{
					\node (\i) at (0, -\i+0.2) {};
					\node[square] (\i') at (0.5, -\i-0.2) {};
				}
				{}
				
				\ifthenelse{\j = 0}
				{
					\node[triangle] (\i) at (0, -\i+0.2) {};
					\node (\i') at (0.5, -\i-0.2) {};
				}
				{}

			}
			
			\draw (1)--(B)--(D)--(1);
			\draw (2)--(D);

			\foreach \i in {1,3,5,7,9}
			{
				\pgfmathtruncatemacro\j{\i+1};
				\draw[black, thick] (\i)--(\j);
				\draw (\i)--(\i');
				\draw[black, thick] (\i')--(\j');
				\draw (\i')--(\j);
				\draw (\j)--(\j');	
			}
			
			\foreach \i in {2,4,6,8}
			{
				\pgfmathtruncatemacro\j{\i+1};
				\draw (\i)--(\j);
				\draw (\i')--(\j');
				\draw (\i')--(\j);
			}

			\foreach [evaluate={\j=int(mod(\i,3));}] \i in {1,...,8}
			{
				\ifthenelse{\j = 1}
				{
					\node (qu\i) at (\i+1.2,-0.9) {};
					\node (ru\i) at (\i+1.2,-3.9) {};
					\node (su\i) at (\i+1.2,-6.9) {};
					\node (tu\i) at (\i+1.2,-9.9) {};
					
					\node[square,scale=1.8] (qd\i) at (\i+0.4,-1.3) {};
					\node[square,scale=1.8] (rd\i) at (\i+0.4,-4.3) {};
					\node[square,scale=1.8] (sd\i) at (\i+0.4,-7.3) {};	
					\node[square,scale=1.8] (td\i) at (\i+0.4,-10.3) {};
				}
				{}
				\ifthenelse{\j = 2}
				{
					\node[square] (qu\i) at (\i+1.2,-0.9) {};
					\node[square] (ru\i) at (\i+1.2,-3.9) {};
					\node[square] (su\i) at (\i+1.2,-6.9) {};
					\node[square] (tu\i) at (\i+1.2,-9.9) {};
					
					\node[triangle] (qd\i) at (\i+0.4,-1.3) {};
					\node[triangle] (rd\i) at (\i+0.4,-4.3) {};				
					\node[triangle] (sd\i) at (\i+0.4,-7.3) {};		
					\node[triangle] (td\i) at (\i+0.4,-10.3) {};
				}
				{}
				
				\ifthenelse{\j = 0}
				{
					\node[triangle] (qu\i) at (\i+1.2,-0.9) {};
					\node[triangle] (ru\i) at (\i+1.2,-3.9) {};
					\node[triangle] (su\i) at (\i+1.2,-6.9) {};
					\node[triangle] (tu\i) at (\i+1.2,-9.9) {};

					\node (qd\i) at (\i+0.4,-1.3) {};
					\node (rd\i) at (\i+0.4,-4.3) {};
					\node (sd\i) at (\i+0.4,-7.3) {};		
					\node (td\i) at (\i+0.4,-10.3) {};
				}
				{}

			}
			
			\foreach \i in {1,3,5,7}
			{
				\pgfmathtruncatemacro\j{\i+1};
				
				\draw[black, thick] (qu\i)--(qu\j);
				\draw[black, thick] (qd\i)--(qd\j);
				
				\draw[black, thick] (ru\i)--(ru\j);
				\draw[black, thick] (rd\i)--(rd\j);
				
				\draw[black, thick] (su\i)--(su\j);
				\draw[black, thick] (sd\i)--(sd\j);
				
				\draw[black, thick] (tu\i)--(tu\j);
				\draw[black, thick] (td\i)--(td\j);
				
				\draw (qu\i)--(qd\i);
				\draw (ru\i)--(rd\i);
				\draw (su\i)--(sd\i);
				\draw (tu\i)--(td\i);
				
				\draw (qu\j)--(qd\j);
				\draw (ru\j)--(rd\j);
				\draw (su\j)--(sd\j);
				\draw (tu\j)--(td\j);
				
				\draw (qu\i)--(qd\j);
				\draw (ru\i)--(rd\j);
				\draw (su\i)--(sd\j);
				\draw (tu\i)--(td\j);
				
			}

			\foreach \i in {2,4,6}
			{
				\pgfmathtruncatemacro\j{\i+1};
				
				\draw (qu\i)--(qu\j);
				\draw (qd\i)--(qd\j);
				
				\draw (ru\i)--(ru\j);
				\draw (rd\i)--(rd\j);
				
				\draw (su\i)--(su\j);
				\draw (sd\i)--(sd\j);
				
				\draw (tu\i)--(tu\j);
				\draw (td\i)--(td\j);

				\draw (qu\i)--(qd\j);
				\draw (ru\i)--(rd\j);
				\draw (su\i)--(sd\j);
				
			}
			
			\draw (qu1)--(1')--(qd1);
			\draw (ru1)--(4')--(rd1);
			\draw (su1)--(7')--(sd1);
			\draw (tu1)--(10')--(td1);

			\end{tikzpicture}
			\caption{The wires transfering one of two colors.}
			\label{fig:initial}
		\end{subfigure}
		
		\begin{subfigure}[t]{0.5\linewidth}
			\centering
			\begin{tikzpicture} [scale = 0.9,
			triangle/.style = {regular polygon, regular polygon sides=3, scale=1.2},
			square/.style = {regular polygon, regular polygon sides=4, scale=1.2}]
			
			\tikzstyle{every node}=[draw=black, fill=gray, shape=circle, minimum size=6pt,inner sep=0pt];
			
			\foreach \i in {1,2,3,4,5,6}
			{	
				\draw (\i,0) circle (0.55cm);
				\draw[gray] (\i+0.2,0.2) circle (0.55cm);
			}
			
			\node[square] (1) at (1,0) {};
			\node[triangle] (2) at (2,0) {};
			\node (3) at (3,0) {};
			\node[square] (4) at (4,0) {};
			\node[triangle] (5) at (5,0) {};
			\node (6) at (6,0) {};
			
			\node (7) at (1.2,0.2) {};
			\node[square] (8) at (2.2,0.2) {};
			\node[triangle] (9) at (3.2,0.2) {};
			\node (10) at (4.2,0.2) {};
			\node[square] (11) at (5.2,0.2) {};
			\node[triangle] (12) at (6.2,0.2) {};
			
			\foreach \i in {1,3,5}
			{
				\pgfmathtruncatemacro\j{\i+1};
				\pgfmathtruncatemacro\k{\i+6};
				\pgfmathtruncatemacro\m{\i+7};
				\draw[ultra thick] (\i) -- (\j);
				\draw[ultra thick] (\k) -- (\m);
			}
			
			\foreach \i in {2,3,4,5,6}
			{
				\pgfmathtruncatemacro\j{\i+5};
				\pgfmathtruncatemacro\k{\i+6};
				\draw[red, dashed] (\i) -- (\j);
				\draw (\i) -- (\k);
				
			}
			
			\foreach \i in {2,4,8,10}
			{
				\pgfmathtruncatemacro\j{\i+1};
				\draw[red, dashed] (\i) -- (\j);	
				
			}
			
			\draw (1) -- (7);
			
			\end{tikzpicture}
			
			\caption{A long edge gadget constructed by six line segments having a unique 3-coloring.}	
			\label{fig:longEdge}	
		\end{subfigure}
		~
		\begin{subfigure}[t]{0.45\linewidth}
			\centering
			\begin{tikzpicture}[scale=1.2, 
			square/.style = {regular polygon, regular polygon sides=4, fill=gray, scale=2.2},
			triangle/.style = {regular polygon, regular polygon sides=3, fill=gray, scale=2.2}
			]
			
			\tikzstyle{every node}=[draw=black, fill=black, shape=circle, minimum size=2pt,inner sep=0pt];
			
			\foreach \i in {0,1,2,3}
			{	
				\node (\i1) at (\i*0.5,0.4) {}; 
				\node (\i0) at (\i*0.5 + 0.2,0.2) {};
			}

			\node (A) at (2,0.4) {};
			\node (B)  at (2,0) {};
			\node[square] (C) at (2.3,0.2) {};
			\node (D) at (2.8,0.2) {};
			\node (E) at (3,0.4) {};
			\node[square] (F) at (3,0.9) {};
			\node (G) at (3.2,0.2) {};
			\node[square] (H) at (3.7,0.2) {};
			\node (I) at (4,0.4) {};
			\node (J)  at (4,0) {};
			
			\foreach \i in {8,9,10,11}
			{	
				\node (\i1) at (\i*0.5+ 0.5,0.4) {}; 
				\node (\i0) at (\i*0.5 + 0.3,0.2) {};
			}
			
			\foreach \i in {0,2,8,10}
			{
				\pgfmathtruncatemacro\j{\i+1};
				\draw[very thick] (\i0)--(\j0);
				\draw[very thick] (\i1)--(\j1);
			}
			
			\draw[very thick] (A)--(B);		
			\draw[very thick] (C)--(D);
			\draw[very thick] (E)--(F);
			\draw[very thick] (G)--(H);
			\draw[very thick] (I)--(J);

			\node[square] (P) at (0.25, 0.6) {};
			\node (Q) at (0.25, 1.1) {};
			\node (U) at (2.7, 1.1) {};
			\node (V) at (3.3, 1.1) {};
			\node[square] (R) at (3,1.3) {};
			\node (S) at (3,1.8) {};
			\node[square] (X) at (5.75, 0.6) {};
			\node (Y) at (5.75, 1.1) {};
			
			\draw[very thick] (P)--(Q);
			\draw[very thick] (U)--(V);
			\draw[very thick] (R)--(S);
			\draw[very thick] (X)--(Y);	
			
			\node[fill=none, shape=rectangle, minimum width=1cm, minimum height=0.8cm] at (3,0.3) {};

			\foreach \i in {1,9}
			{	
				\pgfmathtruncatemacro\j{\i+1};
				\pgfmathtruncatemacro\k{\i-1};
				\draw (\i0)--(\i1);
				\draw (\j0)--(\j1);
				\draw (\k0)--(\k1);
				\draw (\i0)--(\j0);
				\draw (\i1)--(\j1);
			}
			
			\draw (00)--(01);
			\draw (00)--(11);
			\draw (00)--(01);
			\draw (10)--(21);
			\draw (20)--(31);
			\draw (30)--(31);
			
			\draw (81)--(90);
			\draw (91)--(100);
			\draw (101)--(110);
			\draw (111)--(110);
			
			\draw (01)--(P)--(11);
			\draw (B)--(30)--(A)--(31);
			\draw (A)--(C)--(B);
			\draw (U)--(R)--(V);
			\draw (U)--(F)--(V);
			\draw (J)--(H)--(I);
			
			\draw (J)--(80)--(I)--(81);
			\draw (101)--(X)--(111);

			\draw (D)--(E)--(G)--(D);
			\node[triangle, rotate=-90] at (D.center) {};
			\node[triangle, rotate=180] at (E.center) {};
			\node[triangle,rotate=90] at (G.center) {};

			\draw[thick, dotted] (0.2,1.2)--(0.2,2.2);
			\draw[thick, dotted] (0.3,1.2)--(0.3,2.2);
			\draw[thick, dotted] (2.95,1.9)--(2.95,2.2);
			\draw[thick, dotted] (3.05,1.9)--(3.05,2.2);
			\draw[thick, dotted] (5.7,1.2)--(5.7,2.2);
			\draw[thick, dotted] (5.8,1.2)--(5.8,2.2);
			
			\node[draw=none, fill=none] at (1.8,1.1) {\tiny color of $x$};
			\node[draw=none, fill=none] at (2.5,1.6) {\tiny color of $y$};
			\node[draw=none, fill=none] at (4,1.1) {\tiny color of $z$};
			\node[draw=none, fill=none] at (3,-0.15) {\tiny not-all-equal};
			\node[draw=none, fill=none] at (3,-0.3) {\tiny $x \vee y \vee z$};
			\draw [->] (1.8,1)--(2.275,0.25);
			\draw [->] (1.8,1)--(0.3,0.65);
			
			\draw [->] (3.7,1)--(3.7,0.3);
			\draw [->] (3.7,1)--(5.7,0.65);
			
			\draw [->] (2.5,1.5)--(2.9,1.35);
			\draw (2.5,1.5)--(2.5,1);
			\draw[->] (2.5,1)--(2.9,0.9);
			\end{tikzpicture}
			\caption{The clause gadget for $x \vee y \vee z$.}
			\label{fig:naeclause}
		\end{subfigure}	
		
		\begin{subfigure}[b]{0.5\linewidth}
			\centering
			\begin{tikzpicture}[scale=0.5]
			\tikzstyle{every node}=[draw, fill=gray, shape=circle, minimum size=2pt,inner sep=0pt]
			
			\node[label=35:$a$] (A) at (0,2) {}; 
			\node[label=$b$] (B) at (0,4) {}; 
			\node[label=left:$c$] (C) at (2,0) {}; 
			\node[label=100:$d$] (D) at (2,2) {}; 
			\node[label=80:$e$] (E) at (4,2) {}; 
			\node[label=$f$] (F) at (3,3) {}; 
			\node[label=100:$g$] (G) at (2,4) {}; 
			\node[label=$h$] (H) at (2,6) {}; 
			\node[label=100:$i$] (I) at (4,4) {}; 
			\node[label=$j$] (J) at (4,6) {}; 
			\node[label=100:$k$] (K) at (6,2) {}; 
			\node[label=$l$] (L) at (6,4) {}; 
			\node[label=100:$m$] (M) at (4,-2) {}; 
			\node[label=100:$n$] (N) at (4,0) {}; 
			\node[label=80:$o$] (O) at (6,-2) {}; 
			\node[label=80:$p$] (P) at (6,0) {}; 
			\node[label=$q$] (Q) at (8,4) {}; 
			\node[label=100:$r$] (R) at (8,2) {}; 
			
			\draw (A)--(B);
			\draw (A)--(C);
			\draw (A)--(D);
			
			\draw (B)--(G);
			\draw (B)--(H);
			
			\draw (C)--(D);
			\draw (C)--(N);
			
			\draw (D)--(E);
			\draw (D)--(F);
			\draw (D)--(G);
			
			\draw (E)--(F);
			\draw (E)--(K);
			\draw (E)--(N);
			\draw (E)--(I);
			
			\draw (F)--(G);
			\draw (F)--(I);
			
			\draw (G)--(H);
			\draw (G)--(I);
			
			\draw (H)--(J);
			
			\draw (I)--(J);
			\draw (I)--(L);
			
			\draw (J)--(L);
			
			\draw (K)--(L);
			\draw (K)--(N);
			\draw (K)--(P);
			\draw (K)--(Q);
			\draw (K)--(R);
			
			\draw (L)--(Q);
			
			\draw (M)--(P);
			\draw (M)--(O);
			\draw (M)--(N);
			
			\draw (N)--(P);
			
			\draw (O)--(P);
			
			\draw (Q)--(R);
			
			\end{tikzpicture}
			\caption{The edge crossing gadget transferring the color from $a$ to $r$, and from $h$ to $o$.}
			\label{fig:crossingSegments}
		\end{subfigure}
		~
		\begin{subfigure}[b]{0.55\linewidth}
			\centering
			\hspace{-0.5em}
			\begin{tikzpicture}[scale=1.5]
			\draw (0,0) to[grid with coordinates] (3.5,3.5);
			
			\tikzstyle{every node}=[draw, fill=black, shape=circle, minimum size=2pt,inner sep=0pt];
			
			\node[label={\footnotesize $a$}] (A) at (0,2.1) {}; 
			\node[label={\footnotesize $b$}] (B) at (0.6,2.9) {}; 
			
			\node[label=135:{\footnotesize $c$}] (C) at (0.4,1.3) {}; 
			\node[label=135:{\footnotesize $d$}] (D) at (0.6,1.7) {}; 
			
			\node[label=20:{\footnotesize $e$}] (E) at (1.4,1.5) {}; 
			\node[label=20:{\footnotesize $f$}] (F) at (1.4,2.1) {}; 
			
			\node[label=left:{\footnotesize $g$}] (G) at (1.2,2.5) {}; 
			\node[label=-10:{\footnotesize $h$}] (H) at (1.4,3.4) {}; 
			
			\node[label=100:{\footnotesize $i$}] (I) at (1.85,2.15) {}; 
			\node[label=above:{\footnotesize $j$}] (J) at (2.1,3) {}; 
			
			\node[label=0:{\footnotesize $k$}] (K) at (2.2,1.2) {}; 
			\node[label={\footnotesize $l$}] (L) at (2.2,2.2) {}; 
			
			\node[label=0:{\footnotesize $m$}] (M) at (1.3,0.4) {}; 
			\node[label=0:{\footnotesize $n$}] (N) at (1.3,0.9) {}; 
			
			\node[label=0:{\footnotesize $o$}] (O) at (2,0) {}; 
			\node[label=0:{\footnotesize $p$}] (P) at (2,0.6) {}; 
			
			\node[label={\footnotesize $q$}] (Q) at (2.7,1.4) {}; 
			\node[label={\footnotesize $r$}] (R) at (3.1,1.4) {}; 

			\draw[thick] (A)--(B);
			\draw[thick] (C)--(D);
			\draw[thick] (E)--(F);
			\draw[thick] (G)--(H);
			\draw[thick] (I)--(J);
			\draw[thick] (K)--(L);
			\draw[thick] (M)--(N);
			\draw[thick] (O)--(P);
			\draw[thick] (Q)--(R);

			\tikzstyle{every path}=[dashed, draw, color=red];
			
			\draw (A)--(C);
			\draw (A)--(D);
			
			\draw (B)--(G);
			\draw (B)--(H);
			
			\draw (C)--(N);
			
			\draw (D)--(E);
			\draw (D)--(F);
			\draw (D)--(G);
			
			\draw (E)--(K);
			\draw (E)--(N);
			\draw (E)--(I);
			
			\draw (F)--(G);
			\draw (F)--(I);
			
			\draw (G)--(I);
			
			\draw (H)--(J);
			
			\draw (I)--(L);
			
			\draw (J)--(L);
			
			\draw (K)--(N);
			\draw (K)--(P);
			\draw (K)--(Q);
			\draw (K)--(R);
			
			\draw (L)--(Q);
			
			\draw (M)--(P);
			\draw (M)--(O);
			
			\draw (N)--(P);
			
			\end{tikzpicture}
			\caption{The segment embedding of ({\sc e}) on a grid. Only endpoint that is not on the grid is $i$, which is at $(1.85,2.15)$.}
			\label{fig:crossingGadgetSeg}
		\end{subfigure}
		
		\caption{The gadgets used in the proof of Theorem~\ref{thm:main}.}
		\label{fig:threeComponentsSeg}
	\end{figure}

		\subsection{The 3-coloring problem for unit disk segment visibility graphs}
		
		Before proving Theorem~\ref{thm:main}, we describe the gadgets used to construct a unit disk segment visibility graph from a given Monotone NAE3SAT formula in more detail, and show that they correctly transform an instance of the Monotone NAE3SAT problem to an instance of the 3-coloring problem for unit disk segment graphs.
		
		\subsubsection{The long edges and NAE3SAT clauses}\label{sec:longedges}
		
		First, let us describe how we model the long edges to transfer the colors from one vertex to the others.
		In Figure~\ref{fig:longEdge}, there are six line segments on two rows, drawn with thick lines.
		Their endpoints, which correspond to the vertices of the unit disk visibility graph, are shown by three different shapes: square, triangle, and circle.
		The disks drawn around the endpoints are the unit disks.
		The dashed red lines between the endpoints are the visibility edges.
		
		By definition, if two unit disks do not intersect, then the corresponding endpoints are not mutually visible.
		Thus, even though there is no obstacle between some endpoints, they do not share a visibility edge since they are farther than one unit.
		Note that if two endpoints see each other, then they are of different shape.
		Therefore, assigning a unique color for each unique shape gives us a proper 3-coloring in the unit disk visibility graph of this particular configuration.
		
		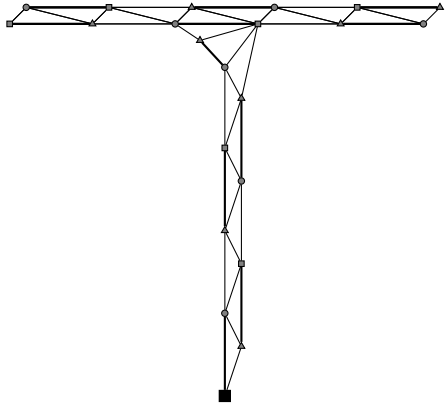
\begin{figure}[htbp]
			\centering
			\begin{tikzpicture} [scale=1.1,
			triangle/.style = {regular polygon, regular polygon sides=3, scale=0.5},
			square/.style = {regular polygon, regular polygon sides=4, scale=0.5},
			circle/.style = {shape=circle, scale=0.4}]
			
			\tikzstyle{every node}=[draw=black, fill=gray, shape=circle, minimum size=6pt,inner sep=0pt];

			\node[square] (1) at (1,0) {};
			\node[triangle] (2) at (2,0) {};
			\node[circle] (3) at (3,0) {};
			\node[square] (4) at (4,0) {};
			\node[triangle] (5) at (5,0) {};
			\node[circle] (6) at (6,0) {};
			
			\node[circle] (7) at (1.2,0.2) {};
			\node[square] (8) at (2.2,0.2) {};
			\node[triangle] (9) at (3.2,0.2) {};
			\node[circle] (10) at (4.2,0.2) {};
			\node[square] (11) at (5.2,0.2) {};
			\node[triangle] (12) at (6.2,0.2) {};
			
			\foreach \i in {1,3,5}
			{
				\pgfmathtruncatemacro\j{\i+1};
				\pgfmathtruncatemacro\k{\i+6};
				\pgfmathtruncatemacro\m{\i+7};
				\draw[thick] (\i) -- (\j);
				\draw[thick] (\k) -- (\m);
			}
			
			\foreach \i in {2,3,4,5,6}
			{
				\pgfmathtruncatemacro\j{\i+5};
				\pgfmathtruncatemacro\k{\i+6};
				\draw(\i) -- (\j);
				\draw (\i) -- (\k);
				
			}
			
			\foreach \i in {2,4,8,10}
			{
				\pgfmathtruncatemacro\j{\i+1};
				\draw (\i) -- (\j);	
				
			}
			
			\draw (1) -- (7);
			
			\node[triangle] (13) at (3.3, -0.2) {};
			\node[circle] (14) at (3.6, -0.525) {};
			
			\node[square] (15) at (3.6, -1.5) {};
			\node[triangle] (16) at (3.6, -2.5) {};
			
			\node[circle] (17) at (3.6, -3.5) {};
			\node[square, scale=2, fill=black] (18) at (3.6, -4.5) {};
			
			\node[triangle] (19) at (3.8,-0.9) {};
			\node[circle] (20) at (3.8,-1.9) {};
			
			\node[square] (21) at (3.8,-2.9) {};
			\node[triangle] (22) at (3.8,-3.9) {};
			
			\foreach \i in {1,3,5}
			{
				\pgfmathtruncatemacro\j{\i+1};
				\pgfmathtruncatemacro\k{\i+6};
				\pgfmathtruncatemacro\m{\i+7};
				\draw[thick] (\i) -- (\j);
				\draw[thick] (\k) -- (\m);
			}
			
			\foreach \i in {2,4,8,10}
			{
				\pgfmathtruncatemacro\j{\i+1};
				\draw (\i) -- (\j);
			}
			
			\foreach \i in {13,15,17,19,21}
			{
				\pgfmathtruncatemacro\j{\i+1};
				\draw[thick] (\i) -- (\j);	
			}
			
			\draw (1)--(7)--(2)--(8)--(3)--(9)--(4)--(10)--(5)--(11)--(6);
			\draw (3)--(13);
			\draw (4)--(19);
			\draw (13)--(4)--(14)--(15);
			\draw (14)--(19)--(15)--(20)--(16)--(21)--(17)--(22)--(18);
			\draw (16)--(17);
			\draw (20)--(21);
			\end{tikzpicture}
			\caption{Setting up the long edge gadget to transfer the color of square endpoints given in Figure~\ref{fig:3satCircuit} and Figure~\ref{fig:initial}.}
			\label{fig:transferringSegment}
		\end{figure}
		
		\begin{obs}
			In a proper 3-coloring of the unit disk visibility graph of the line segment configuration given in Figure~\ref{fig:longEdge}, a pair of vertices receive the same color if and only if their corresponding endpoints are of same shape.
		\end{obs}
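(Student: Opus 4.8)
The plan is to show that, up to renaming the three colours, the assignment ``one colour per shape'' is the \emph{only} proper $3$-colouring of this gadget. One direction is already recorded in the paragraph above: colouring each endpoint by its shape is proper, since (as one reads off the unit disks in Figure~\ref{fig:longEdge}) no two endpoints of the same shape are mutually visible. So it suffices to prove that \emph{every} proper $3$-colouring of this graph gives two endpoints the same colour exactly when they carry the same shape; equivalently, that it induces a bijection between shapes and colours.

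For notation, label the six bottom endpoints $1,\dots,6$ and the six top endpoints $7,\dots,12$, each row from left to right, following the coordinates of the construction. Recall two facts about the visibility graph of this configuration: the two endpoints of each of the six segments are mutually visible (hence adjacent), and every other adjacency is exactly one of the visibility edges drawn dashed in Figure~\ref{fig:longEdge}. I would first isolate a family of triangles (i.e.\ $3$-cliques) $T_1,\dots,T_{10}$ with three properties: (a) each $T_i$ contains exactly one endpoint of each of the three shapes; (b) consecutive triangles overlap in an edge, i.e.\ $|T_i\cap T_{i+1}|=2$ and the two common vertices are adjacent; (c) $\bigcup_i T_i$ is all twelve endpoints. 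Concretely one can take $T_1=\{1,2,7\}$, $T_2=\{2,7,8\}$, $T_3=\{2,3,8\}$, $T_4=\{3,8,9\}$, $T_5=\{3,4,9\}$, $T_6=\{4,9,10\}$, $T_7=\{4,5,10\}$, $T_8=\{5,10,11\}$, $T_9=\{5,6,11\}$, $T_{10}=\{6,11,12\}$: these march along the two-row strip, each is a triangle by the adjacencies above, consecutive ones share an edge, and together they introduce every vertex $1,\dots,12$.

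With this in hand the argument is a short induction. Fix any proper $3$-colouring $c$. The three vertices of $T_1$ are pairwise adjacent, so they get three distinct colours; since they also carry three distinct shapes, $c|_{T_1}$ defines a bijection $\chi$ from shapes to colours. Suppose inductively that every vertex of $T_1\cup\dots\cup T_i$ satisfies $c(v)=\chi(\mathrm{shape}(v))$. The triangle $T_{i+1}$ shares an edge with $T_i$; let $u,v$ be its two shared endpoints and $w$ the third vertex. By the hypothesis $c(u)=\chi(\mathrm{shape}(u))$ and $c(v)=\chi(\mathrm{shape}(v))$ are two different colours, and $w$ is adjacent to both, so $c(w)$ is forced to the remaining colour. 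Since $\mathrm{shape}(w)$ is the shape distinct from $\mathrm{shape}(u),\mathrm{shape}(v)$ and $\chi$ is a bijection, that remaining colour equals $\chi(\mathrm{shape}(w))$, so $c(w)=\chi(\mathrm{shape}(w))$ and the hypothesis persists. As the $T_i$ cover all twelve vertices, $c(v)=\chi(\mathrm{shape}(v))$ everywhere; hence two vertices receive the same colour under $c$ iff they have the same shape, the ``if'' direction because $\chi$ is a well-defined function of the shape and the ``only if'' because $\chi$ is injective.

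The only genuine work lies in the two ``recalled facts'' used to build the triangles: that each segment's endpoints are mutually visible, and --- the point that actually needs checking --- that the edge set of the unit disk visibility graph of this configuration is exactly what Figure~\ref{fig:longEdge} shows, namely that for each claimed triangle the third edge is present (the relevant endpoints lie within one unit and no segment blocks the line between them) while all same-shape pairs are far enough apart to stay non-adjacent. Once the adjacency list is confirmed, properties (a)--(c) of the triangle family, and therefore the Observation, follow immediately. I would thus spend most of the write-up verifying coordinates and unit-disk radii, and keep the colouring argument at the level sketched here.
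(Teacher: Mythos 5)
Your proposal is correct, and it follows the same underlying idea the paper relies on: the gadget is a chain of edge-sharing triangles, each containing one endpoint of every shape, so a single triangle fixes a shape-to-color bijection that propagates down the strip. The paper merely asserts the observation (justifying only that the shape coloring is proper, and stating uniqueness in the figure caption), so your explicit triangle-chain induction supplies the uniqueness argument the paper leaves implicit; the only remaining work, as you note, is confirming that the adjacency list of the unit disk visibility graph is exactly the one drawn in the figure.
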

		
		Observe that on each row, the shapes of the endpoints are repeating sequentially.
		Given such a configuration with arbitrarily many line segments, if we consider the ordering $(1,2, \dots)$ of the endpoints from left to right on a single row, then the color of $i$th endpoint will have the same color with $(i+3)$th endpoint.
		This helps us to transfer a color from one side of the configuration to the desired position, enabling the usage of long edges in the circuit.
		
		Consider the circuit given in Figure~\ref{fig:3satCircuit}.
		There are four boolean variables, $q$, $r$, $s$, $t$, and three clauses, $A$, $B$, $C$.
		Each variable has a long edge, representing its ``wire.''
		On each row, three shapes are repeating, square, triangle, and circle, in this precise order.
		These three shapes are in the same order with the configuration given in Figure~\ref{fig:longEdge}.
		If a variable appears in a clause, then it is shown by an edge from the corresponding row to the clause.
		Thus, when a color is assigned to a square endpoint, the same color repeats along the wire,
		and eventually is transferred to the clause.
		
		The truth assignments of the variables are determined by a pair of colors.
		In this case, every square endpoint on a wire should receive one of these two colors.
		We guarantee this by setting up the gadget shown in Figure~\ref{fig:initial}.
		The color of the triangle vertex is the ``neutral'' color, which means that the remaining two colors represents ``true'' and ``false'' for the variables.
		Thus, by picking a color for the triangle, we enforce every variable to transfer either true or false.
		Important part is that a variable connects to a clause only by one of the squares.
		Therefore, the truth assignment of a variable is transferred to the clause by the color of a square endpoint.
		Let us assume that the triangles in Figure~\ref{fig:initial} are colored green.
		In Figure~\ref{fig:transferringSegment}, we see an embedding of line segments, in which the variables transfer either blue, or red to the clauses below.
		
		Three components given in Figure~\ref{fig:3satCircuit}, Figure~\ref{fig:initial}, and Figure~\ref{fig:transferringSegment}
		guarantee that two colors representing true and false are transferred properly to the NAE3SAT clauses.
		In a NAE3SAT instance, a clause should be not-all-equal i.e., in a clause at least one variable must be false, and at least one variable must be true.
		Now, let us show that if a clause receives three variables with the same truth assignment, then the gadget is not 3-colorable.
		
		Regardless of the number of variables, and number of clauses, the long vertical edges in Figure~\ref{fig:3satCircuit} never intersect.
		This is because a variable transfers its color from different endpoints on the corresponding row, for every clause it appears in.
		In Figure~\ref{fig:3satCircuit}, we see that the variables are connected to clauses via vertical edges these are close to each other.
		In other words, every clause has its own area, and no two vertical edges intersect.
		This allows us to design a gadget that is able to connect three variables, even if they are far apart.
		
		Up to this point, we showed how we model the long edges to transfer the colors.
		However, given an instance of the Monotone NAE3SAT problem, there is no guarantee that the circuit can be drawn without edge crossings.
		Note that the circuit given in Figure~\ref{fig:3satCircuit}, the vertical edges passes through several long edges.
		Since our embedding is in 2D, we need a gadget which has no edge crossings, and also can be used to transfer the colors properly when two edges intersect in the circuit.
		
		\subsubsection{The edge crossings}\label{sec:edgecrossings}
		
		We now describe a certificate that transfers the colors safely in case of edge crossings.
		The certificate that we use is similar to the certificate given by Gr{\"a}f et al. for the reduction of unit disk graph coloring problem (see Figure 4 in \cite{Graf_udgColoring}).
		
		
		In Figure~\ref{fig:crossingSegments} and Figure~\ref{fig:crossingGadgetSeg}, we give the gadget to replace the edge crossings in the 3SAT circuit, and its embedding as line segments, respectively. As we have mentioned previously, the colors repeat.
		
		\begin{clm} \label{2uniqueCol}
			There are exactly two distinct (proper) 3-colorings of the edge crossing gadget gadget given in Figure~\ref{fig:crossingSegments}, all of which require the vertex pairs $a, r$ and $h, o$ to receive the same color.
		\end{clm}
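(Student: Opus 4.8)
The plan is to establish that the gadget is \emph{colour‑rigid}: after normalising the colours on one triangle, every remaining vertex is forced and exactly one binary branch survives, and both resulting colourings satisfy $\chi(a)=\chi(r)$ and $\chi(h)=\chi(o)$. Throughout, ``forced'' means that the vertex under consideration already has two neighbours carrying two different colours, so its colour is uniquely determined.

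First I would isolate the \emph{core}. The vertices $d,e,f$ induce a triangle, so any proper $3$‑colouring assigns them the three colours in some order; since the statement counts colourings and the gadget is plugged into a context where the neutral colour and the incoming colours are already pinned, we may without loss of generality fix $\chi(d),\chi(e),\chi(f)$ to three given, pairwise distinct colours (equivalently, count colourings up to permuting the three colour classes). Now $f$ is adjacent to each of $d,e,g,i$, whereas $d,e,i,g$ induce a $4$‑cycle with edges $\{d,e\},\{e,i\},\{i,g\},\{g,d\}$; hence $\{\chi(g),\chi(i)\}$ avoids $\chi(f)$, and the cycle being even forces $\chi(g)=\chi(e)$ and $\chi(i)=\chi(d)$.

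Next, the triangle $\{a,c,d\}$ supplies the only branch: as $\chi(d)$ is fixed, $\{\chi(a),\chi(c)\}$ is the pair of remaining colours, so either (i) $\chi(a)=\chi(f),\ \chi(c)=\chi(e)$, or (ii) $\chi(a)=\chi(e),\ \chi(c)=\chi(f)$. In each case I would propagate forced colours along the two chains of the gadget — one through $b,h,j,l$ and on to $q,r$, the other through $n,k,p$ and on to $m,o$ — using the edges of Figure~\ref{fig:crossingSegments} and checking at each vertex that two already‑coloured neighbours disagree. A point to watch is that the safe propagation order differs between the two cases: in case (ii), for instance, $b$ is pinned down only after $h$, which is itself reached via $c\to n\to k\to l\to j$ rather than directly from $a\to b$. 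Running this to completion and tabulating the two resulting colourings simultaneously (a) verifies that no conflict arises, so each case genuinely yields a proper colouring, (b) shows there are exactly these two colourings, since every step after the branch was forced, and (c) lets one read off directly that $\chi(a)=\chi(r)$ and $\chi(h)=\chi(o)$ in both.

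The main obstacle is purely the bookkeeping: one must certify that every propagation step is genuinely forced (never two live options, never a dead end) and organise the two cases so that no vertex is coloured before the pair of neighbours that triggers it. No cleverness is needed beyond the observations about the triangle $\{d,e,f\}$ and the $4$‑cycle on $d,e,i,g$, which do all the rigidifying work; ruling out ``extra'' colourings then reduces entirely to the fact that the branch at $\{a,c,d\}$ is the only unforced choice made.
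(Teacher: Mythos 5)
Your proposal is correct and takes essentially the same route as the paper: both first pin down the uniquely 3-colorable wheel on $\{d,e,f,g,i\}$ (apex $f$ over the 4-cycle $d$--$e$--$i$--$g$), then observe that a single binary branch at a triangle containing one already-colored vertex remains, and propagate forced colors to obtain exactly two colorings, each with $\chi(a)=\chi(r)$ and $\chi(h)=\chi(o)$. The only cosmetic difference is that you branch at the triangle $\{a,c,d\}$ while the paper branches at $\{e,k,n\}$ — a choice the paper itself notes is equivalent.
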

		
		\begin{proof}
			Observe that there is a unique coloring of the induced subgraph $G'$ by the vertex subset $\{d, e, f, g, i\}$ up to a permutation of colors. Since $f$ is adjacent to all vertices in $G'$, it has to obtain a unique color $c_g$. Then, coloring any of $\{d, e, g, i\}$ uniquely determines the colors of the rest of these vertices where $e$ and $g$ get the same color $c_b \neq c_g$, and $d$ and $i$ get the same color $c_r \neq c_g, c_b$. We continue by choosing any of the triangles $\{e, k, n\}$, $\{i, j, l\}$, $\{a, c, d\}$  and $\{b, g, h\}$ whose one vertex is already colored by the coloring of $G'$. Assume that we choose $\{e, k, n\}$ where $e$ has the color $c_b$. There are two cases which may occur. 
			
			The first case is when $k$ is colored with $c_g$ and $n$ is colored with $c_r$.
			In this case, the color of every uncolored vertex is determined and as in Figure~\ref{fig:case1}. Note that this case is the same as the following cases: \emph{i)} choosing the triangle $\{i, j, l\}$ and coloring $j$ with $c_g$ and $l$ with $c_b$, \emph{ii)} choosing the triangle $\{a, c, d\}$ and coloring $a$ with $c_b$ and $c$ with $c_g$, and \emph{iii)} choosing the triangle $\{b, g, h\}$ and coloring $b$ with $c_g$ and $h$ with $c_r$.
			
			The second case is when $k$ is colored with $c_r$ and $n$ is colored with $c_g$.
			In this case, the color of every uncolored vertex is determined and as in Figure~\ref{fig:case2} (b). Note that this case is the same as the following cases: \emph{i)} choosing the triangle $\{i, j, l\}$ and coloring $j$ with $c_b$ and $l$ with $c_g$, \emph{ii)} choosing the triangle $\{a, c, d\}$ and coloring $a$ with $c_g$ and $c$ with $c_b$, and \emph{iii)} choosing the triangle $\{b, g, h\}$ and coloring $b$ with $c_r$ and $h$ with $c_g$.
			
			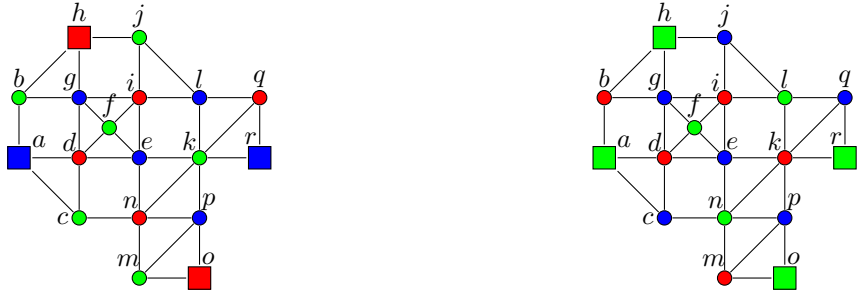
\begin{figure}[htbp]
				\centering
				\captionsetup[subfigure]{position=b}
				\begin{subfigure}[b]{0.45\linewidth}
					\centering
					\begin{tikzpicture}[scale=0.4,
					square/.style = {regular polygon, regular polygon sides=4, scale=4},
					circle/.style = {shape=circle, scale=1.8}]
					\tikzstyle{every node}=[draw, color=black, shape=circle, minimum size=3pt,inner sep=0pt]
					
					\node[label=35:$a$, square, fill=blue] (A) at (0,2) {}; 
					\node[label=$b$, circle, fill=green] (B) at (0,4) {}; 
					\node[label=left:$c$, circle, fill=green] (C) at (2,0) {}; 
					\node[label=100:$d$, circle, fill=red] (D) at (2,2) {}; 
					\node[label=80:$e$, circle, fill=blue] (E) at (4,2) {}; 
					\node[label=$f$, circle, fill=green] (F) at (3,3) {}; 
					\node[label=100:$g$, circle, fill=blue] (G) at (2,4) {}; 
					\node[label=$h$, square, fill=red] (H) at (2,6) {}; 
					\node[label=100:$i$, circle, fill=red] (I) at (4,4) {}; 
					\node[label=$j$, circle, fill=green] (J) at (4,6) {}; 
					\node[label=100:$k$, circle, fill=green] (K) at (6,2) {}; 
					\node[label=$l$, circle, fill=blue] (L) at (6,4) {}; 
					\node[label=100:$m$, circle, fill=green] (M) at (4,-2) {}; 
					\node[label=100:$n$, circle, fill=red] (N) at (4,0) {}; 
					\node[label=80:$o$, square, fill=red] (O) at (6,-2) {}; 
					\node[label=80:$p$, circle, fill=blue] (P) at (6,0) {}; 
					\node[label=$q$, circle, fill=red] (Q) at (8,4) {}; 
					\node[label=100:$r$, square, fill=blue] (R) at (8,2) {}; 
					
					\draw (A)--(B);
					\draw (A)--(C);
					\draw (A)--(D);
					
					\draw (B)--(G);
					\draw (B)--(H);
					
					\draw (C)--(D);
					\draw (C)--(N);
					
					\draw (D)--(E);
					\draw (D)--(F);
					\draw (D)--(G);
					
					\draw (E)--(F);
					\draw (E)--(K);
					\draw (E)--(N);
					\draw (E)--(I);
					
					\draw (F)--(G);
					\draw (F)--(I);
					
					\draw (G)--(H);
					\draw (G)--(I);
					
					\draw (H)--(J);
					
					\draw (I)--(J);
					\draw (I)--(L);
					
					\draw (J)--(L);
					
					\draw (K)--(L);
					\draw (K)--(N);
					\draw (K)--(P);
					\draw (K)--(Q);
					\draw (K)--(R);
					
					\draw (L)--(Q);
					
					\draw (M)--(P);
					\draw (M)--(O);
					\draw (M)--(N);
					
					\draw (N)--(P);
					
					\draw (O)--(P);
					
					\draw (Q)--(R);
					
					
					\end{tikzpicture}
					\caption{Case 1 after the coloring of $G'$ is fixed.}
					\label{fig:case1}
				\end{subfigure}
				~
				\begin{subfigure}[b]{0.45\linewidth}
					\centering
					\begin{tikzpicture}[scale=0.4,
					square/.style = {regular polygon, regular polygon sides=4, scale=4},
					circle/.style = {shape=circle, scale=1.8}]
					\tikzstyle{every node}=[draw, color=black, shape=circle, minimum size=3pt,inner sep=0pt]
					
					\node[label=35:$a$, square, fill=green] (A) at (6,2) {}; 
					\node[label=$b$, circle, fill=red] (B) at (6,4) {}; 
					\node[label=left:$c$, circle, fill=blue] (C) at (8,0) {}; 
					\node[label=100:$d$, circle, fill=red] (D) at (8,2) {}; 
					\node[label=80:$e$, circle, fill=blue] (E) at (10,2) {}; 
					\node[label=$f$, circle, fill=green] (F) at (9,3) {}; 
					\node[label=100:$g$, circle, fill=blue] (G) at (8,4) {}; 
					\node[label=$h$, square, fill=green] (H) at (8,6) {}; 
					\node[label=100:$i$, circle, fill=red] (I) at (10,4) {}; 
					\node[label=$j$, circle, fill=blue] (J) at (10,6) {}; 
					\node[label=100:$k$, circle, fill=red] (K) at (12,2) {}; 
					\node[label=$l$, circle, fill=green] (L) at (12,4) {}; 
					\node[label=100:$m$, circle, fill=red] (M) at (10,-2) {}; 
					\node[label=100:$n$, circle, fill=green] (N) at (10,0) {}; 
					\node[label=80:$o$, square, fill=green] (O) at (12,-2) {}; 
					\node[label=80:$p$, circle, fill=blue] (P) at (12,0) {}; 
					\node[label=$q$, circle, fill=blue] (Q) at (14,4) {}; 
					\node[label=100:$r$, square, fill=green] (R) at (14,2) {}; 
					
					\draw (A)--(B);
					\draw (A)--(C);
					\draw (A)--(D);
					
					\draw (B)--(G);
					\draw (B)--(H);
					
					\draw (C)--(D);
					\draw (C)--(N);
					
					\draw (D)--(E);
					\draw (D)--(F);
					\draw (D)--(G);
					
					\draw (E)--(F);
					\draw (E)--(K);
					\draw (E)--(N);
					\draw (E)--(I);
					
					\draw (F)--(G);
					\draw (F)--(I);
					
					\draw (G)--(H);
					\draw (G)--(I);
					
					\draw (H)--(J);
					
					\draw (I)--(J);
					\draw (I)--(L);
					
					\draw (J)--(L);
					
					\draw (K)--(L);
					\draw (K)--(N);
					\draw (K)--(P);
					\draw (K)--(Q);
					\draw (K)--(R);
					
					\draw (L)--(Q);
					
					\draw (M)--(P);
					\draw (M)--(O);
					\draw (M)--(N);
					
					\draw (N)--(P);
					
					\draw (O)--(P);
					
					\draw (Q)--(R);

					\end{tikzpicture}
					\caption{Case 2 after the coloring of $G'$ is fixed.}
					\label{fig:case2}
				\end{subfigure}
				\caption{Exactly two possible colorings of the edge crossing gadget (up to a permutation of colors).}
				\label{fig:twoCases}
			\end{figure}
			
			In both cases, the color of $a$ is the same as the color of $r$, and the color of $h$ is the same as the color of $o$. Therefore, our gadget transfers the color from $a$ to $r$ and the color from $h$ to $o$ correctly. Moreover, in case 1, $a$ and $h$ are assigned different colors while in case 2 they are assigned the same color. Therefore, when $a$ and $h$ are forced to be true or false and not neutral, our gadget corresponds to all possible truth assignments for these vertices. 
			In Figure~\ref{fig:twoCases}, these two unique colorings are demonstrated.
		\end{proof}
		
		\subsubsection{The proof of Theorem~\ref{thm:main}}

	\begin{thm}\apxmark \label{thm:main}
		There is a polynomial-time reduction from the Monotone NAE3SAT problem to the 3-coloring problem for unit disk segment visibility graphs.
	\end{thm}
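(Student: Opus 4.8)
The plan is to give an explicit polynomial-time construction turning a Monotone NAE3SAT formula $\Phi$ with variables $x_1,\dots,x_n$ and clauses $C_1,\dots,C_m$ into a set $S$ of line segments whose unit disk segment visibility graph $G$ is 3-colorable if and only if $\Phi$ admits a not-all-equal satisfying assignment, and then to check both directions of the equivalence together with polynomiality.

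First I would fix the global layout. For each variable $x_i$ I reserve a horizontal strip and place a long-edge gadget (Figure~\ref{fig:longEdge}), chained as in Figure~\ref{fig:transferringSegment} so that it spans the whole width of the drawing; this is the wire of $x_i$. At the left end of every wire I attach the initialization gadget of Figure~\ref{fig:initial}, and I declare the triangle-shaped endpoints to carry the ``neutral'' color (say green), so that every square endpoint along a wire is forced to one of the two remaining colors; I identify blue with \emph{true} and red with \emph{false}. By the Observation following Figure~\ref{fig:longEdge}, the periodic pattern of shapes means that fixing the color of one square endpoint of a wire determines the colors of all its endpoints, so a wire faithfully carries a single bit. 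For each clause $C_j=(x_a\vee x_b\vee x_c)$ I reserve a vertical column, disjoint from the columns of the other clauses, and place a copy of the clause gadget of Figure~\ref{fig:naeclause}, whose three input endpoints are to receive the colors of square endpoints of the wires of $x_a$, $x_b$, $x_c$ (as in the circuit picture of Figure~\ref{fig:3satCircuit}). Three vertical connectors are routed from those wires into the clause gadget; since each clause taps each of its wires from a distinct triple of consecutive endpoints, the connectors of one clause stay inside that clause's column and never cross one another, but a connector does cross the horizontal wires lying between its source row and the clause gadget, and every such crossing is replaced by the edge-crossing gadget of Figures~\ref{fig:crossingSegments} and~\ref{fig:crossingGadgetSeg}.

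Next I would argue polynomiality and geometric cleanliness. There are $n$ wires, $m$ clause gadgets, and $O(nm)$ crossings; each gadget has a fixed constant size, each wire is a chain of polynomially many segments, and the entire picture can be drawn on a grid of polynomial dimensions --- the crossing gadget already has an explicit grid embedding in Figure~\ref{fig:crossingGadgetSeg}, and the wires and clause gadgets are built on the same grid --- so the number of segments and the bit-length of their coordinates are polynomial in $n+m$. The part I expect to be the main obstacle is verifying that the geometry is \emph{clean}: that two endpoints of $S$ are joined by an edge of $G$ exactly when the gadget description intends, i.e. that no two segments from different gadgets happen to lie within unit distance while seeing each other, and that the obstacle structure inside each gadget blocks precisely the visibilities on which Claim~\ref{2uniqueCol} and the clause-gadget analysis rely. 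This is handled by separating the reserved strips and columns by more than one unit (so inter-gadget distance kills all stray visibility edges) and, where a sightline inside a gadget must be interrupted, by inserting short blocker segments; Figure~\ref{fig:crossingGadgetSeg} is the template for doing this for the crossing gadget, and Figure~\ref{fig:naeclause} is drawn in the same style. The coordinates obtained this way are already polynomial, so no further rescaling is needed.

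Finally I would prove the equivalence. For the forward direction, given a not-all-equal satisfying assignment of $\Phi$, color every triangle endpoint green, color the square endpoints of the wire of $x_i$ blue if $x_i$ is \emph{true} and red otherwise, and extend along each wire by the periodic pattern of Figure~\ref{fig:longEdge}; each crossing gadget then admits one of its two colorings from Claim~\ref{2uniqueCol} compatible with the two colors entering it, and each clause gadget is 3-colorable precisely because its three inputs are not all the same color, which is the defining property of the gadget in Figure~\ref{fig:naeclause}. Conversely, from any proper 3-coloring of $G$, the initialization gadget forces the squares of each wire to a single non-neutral color, read off as the truth value of $x_i$; Claim~\ref{2uniqueCol} transports these colors unchanged through every crossing to the clause inputs, and properness of each clause gadget forbids the three inputs of $C_j$ from being monochromatic, i.e. forces $C_j$ to be not-all-equal satisfied. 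Hence $G$ is 3-colorable iff $\Phi$ is a \textsc{yes}-instance of Monotone NAE3SAT, and since Monotone NAE3SAT is NP-hard, this polynomial-time reduction proves the theorem.
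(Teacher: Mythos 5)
Your proposal is correct and follows essentially the same route as the paper's proof: the same long-edge wires with the periodic three-color pattern, the same initialization gadget fixing a neutral color, the same NAE clause gadget built from three segments whose inner endpoints form a $K_3$, and the same edge-crossing gadget analyzed via Claim~\ref{2uniqueCol}, together with the same $O(nm)$ size and grid-coordinate argument for polynomiality. The only difference is presentational: you flag the ``geometric cleanliness'' of inter-gadget visibilities as the main point needing care, which the paper addresses only informally via its example embedding and supplementary blocker segments.
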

	\begin{proof}
		Given an instance of the Monotone NAE3SAT problem, we construct a circuit where every boolean variable is a wire placed horizontally in the plane, on a different row.
		The clauses are placed on the bottom-most row. 
		The truth assignment of the variables are transferred to the clauses via vertical wires, one end connected to the horizontal wire, other end connected to the clause (See Figure~\ref{fig:3satCircuit}).
		
		
		We replace the wires in the described construction by a series of line segments such that the set of segments on a wire has a unique 3-coloring. Three main components of our reduction described in detail can be summarized as follows.
		
		(1) \emph{A long edge} shown in Figure~\ref{fig:longEdge} is used to transfer a color from one end to the other (similar to transferring the truth assignment of a variable). The horizontal line segments are placed on two rows in the Euclidan plane. This configuration, no matter how long, always has a unique 3-coloring. Thus, by assigning a color to any endpoint, one automatically decides which color to appear in the clause gadget.
		
		(2) \emph{A clause gadget} shown in Figure~\ref{fig:naeclause} is modeled by three line segments $xx'$, $yy'$, and $zz'$, not allowing three variables to have the same truth assignment. In our case, the transferred colors. 
		One endpoint from each segment, $x', y', z'$ yield a $K_3$.
		The remaining endpoints, $x, y, z$ do not see each other.
		The color of $x$ is transferred using long edges, and since $x'$ is the other endpoint, the color of $x'$ must be different.
		The same rule applies to $y$ and $z$ as well.
		Thus, if all $x, y, z$ have the same color, then this clause gadget cannot be 3-colored.
		
		(3) \emph{An edge crossing gadget} shown in Figure~\ref{fig:crossingSegments} describes a certificate for an edge crossing in the circuit so that it can be realized as a set of non-intersecting line segments.		
		
		The truth assignments of the variables are determined by a pair of colors.
		In this case, every square endpoint on a wire should receive one of these two colors.
		We guarantee this by setting up the gadget shown in Figure~\ref{fig:initial}.
		The color of the triangle vertex is the ``neutral'' color, which means that the remaining two colors represents ``true'' and ``false'' for the variables.
		Thus, by picking a color for the triangle, we enforce every variable to transfer either true or false.
		Important part is that a variable connects to a clause only by one of the squares.
		Therefore, the truth assignment of a variable is transferred to the clause by the color of a square endpoint.
		
		Given a Monotone NAE3SAT formula with $m$ clauses $C_1, \dots, C_m$ and $n$ variables $q_1, \dots, q_n$, we construct the corresponding unit disk segment visibility graph $G$ as follows:	
		\begin{itemize}
			\item For each variable $q_i$, add a vertex $v_i$ to $G$ together with a long horizontal edge $H_i$ transferring its color.
			\item For each clause $C_i$ and each variable $q_j$ in $C_i$, add a triangle $T_i$ to $G$ together with a long vertical edge $V_j$ transferring the color of $v_j$. 
			\item For each $V_j$ crossing a $H_i$, add an edge crossing gadget (certificate) to $G$ replacing the vertices in the intersection $V_j \cup H_i$.	
		\end{itemize}
		
		Since we have shown that our gadgets interpret a given Monotone NAE3SAT formula and transfer colors correctly, the constructed unit disk segment visibility graph is 3-colorable if and only if the given Monotone NAE3SAT formula has a satisfying assignment. 
		
		\vspace{0.2cm}
		\noindent{\textbf{The time and space complexity.}} For $n$ variables and $m$ clauses, the number of segments on a vertical long ``wire'' is at most $O(m)$, since the colors repeat every three endpoints. 
		Since there are $n$ variables, total number of segments for vertical edges is at most $O(nm)$.
		For every edge crossing, and for every clause, a constant number of line segments is needed.
		In the worst case, there will be $O(nm)$ edge crossings, hence $O(nm)$ line segments for a constant $c$.
		There are $m$ clauses, and thus $O(m)$ edges are needed for the clauses.
		In total, $O(m + nm + nm + m) = O(m + nm)$ segments are enough to model a NAE3SAT instance with $n$ variables and $m$ clauses. 
		
		It is trivial to see that the configurations given in Figures~\ref{fig:longEdge} and~\ref{fig:threeComponentsSeg} take up polynomial space.
		The edge crossing gadget given in Figure~\ref{fig:crossingSegments} has an embedding with polynomially many digits which can be verified by the coordinate system given in Figure~\ref{fig:crossingGadgetSeg}.
		Notice that when a horizontal edge and a vertical edge cross, because of the embedding, two ends of the edge crossing gadget have slightly different $y$-coordinates for the endpoints of the horizontal segments ($a$ and $r$ in Figure~\ref{fig:crossingGadgetSeg}) and slightly different $x$-coordinates for the endpoints of the vertical segments ($h$ and $o$ in Figure~\ref{fig:crossingGadgetSeg}).
		At first, it seems like the total space that the gadget uses will grow with respect to the number of edge crossings.
		However, this is not the case since we can simply use a pair of diagonal segments to shift the position of the upcoming horizontal (resp. vertical) segments back to the initial $y$-coordinate (resp. $x$-coordinate). 
		Because of the repeating color patterns,  the distance between a pair of adjacent variables can be adjusted such that the crossings have enough space to be embedded in.

		As we proved the correctness of our reduction and showed that it is a polynomial-time reduction, the theorem holds. Since the Monotone NAE3SAT problem is NP-complete \cite{Schaefer_complexitySAT}, the 3-coloring problem for unit disk segment visibility graphs is also NP-complete.
	\end{proof}
	
	\begin{proof}[Sketch proof]					
		Three main components of our reduction are as follows.
		
		(1) \emph{A long edge} shown in Figure~\ref{fig:longEdge} is used to transfer a color from one end to the other (similar to transferring the truth assignment of a variable). This configuration, no matter how long, always has a unique 3-coloring (up to permutation). 
		
		(2) \emph{A clause gadget} shown in Figure~\ref{fig:naeclause} is modeled by three line segments $xx'$, $yy'$, and $zz'$, not allowing three variables to have the same truth assignment, in our case, the transferred colors. 
		If all $x, y, z$ have the same color, then this clause gadget cannot be 3-colored.
		
		(3) \emph{An edge crossing gadget} shown in Figure~\ref{fig:crossingSegments} describes a certificate for an edge crossing in the circuit so that it can be realized as a set of non-intersecting line segments.		
		
		Given a Monotone NAE3SAT formula with $m$ clauses $C_1, \dots, C_m$ and $n$ variables $q_1, \dots, q_n$, we construct the corresponding unit disk segment visibility graph $G$ as follows:	
		\begin{itemize}
			\item For each variable $q_i$, add a vertex $v_i$ to $G$ together with a long horizontal edge $H_i$ transferring its color.
			\item For each clause $C_i$ and each variable $q_j$ in $C_i$, add a triangle $T_i$ to $G$ together with a long vertical edge $V_j$ transferring the color of $v_j$. 
			\item For each $V_j$ crossing a $H_i$, add an edge crossing gadget (certificate) to $G$ replacing the vertices in the intersection $V_j \cup H_i$.	
		\end{itemize}			
		
		This reduction correctly maps any instance of the Monotone NAE3SAT problem to some instance of the 3-coloring problem for unit disk segment visibility graphs, in polynomial-time and polynomial-space.
	\end{proof}
	
	Since the Monotone NAE3SAT problem is NP-complete \cite{Schaefer_complexitySAT}, the 3-coloring problem for unit disk segment visibility graphs is also NP-complete by Theorem~\ref{thm:main}.	
	
	\begin{rem}
		Since the 3-coloring problem for unit disk graphs \cite{Graf_udgColoring} is NP-complete, it is also NP-complete for unit disk point visibility graphs by Lemma~\ref{lem:UnitUDVG}. For an alternative reduction to \cite{Graf_udgColoring}, the mentioned gadgets can be utilized with small modifications.
	\end{rem}

		\subsubsection{An example embedding}
		
		We show an example embedding of a single clause $(q \vee s \vee t)$.
		Note that the figures given throughout Section~\ref{sec:3coloringSegment} are to describe the idea behind the proof.
		In an actual embedding, we might need some supplementary segments to transfer the colors properly.
		In Figure~\ref{fig:exampleSegmentEmb}, there are some extra segments around the edge crossing gadgets.
		These segments have no function other than transferring the last seen color.
		This can also be done by altering the length of each segment.
		
		In Figure~\ref{fig:crossingZoom}, we show a zoom-in view of a crossing which appear in the example embedding given in Figure~\ref{fig:exampleSegmentEmb}.
		The shaded area is the edge crossing gadget described in Figure~\ref{fig:crossingSegments}.
		The bold lines denote the segments, and the thin, red lines denote how a color from a long edge is transferred to and from the edge crossing gadget.
		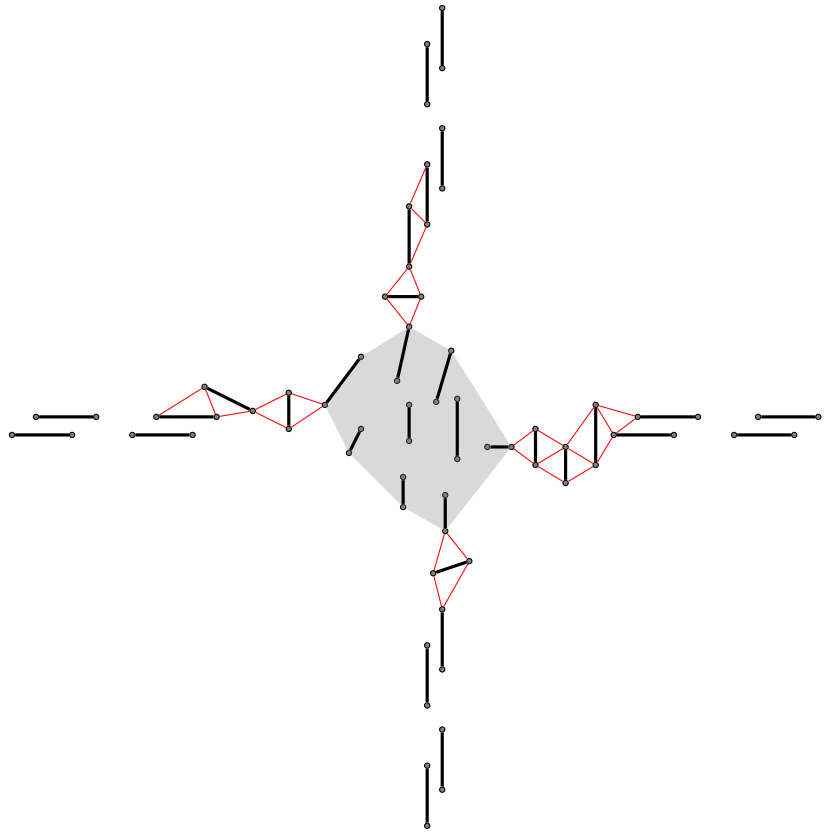
\begin{figure}[htbp]
			\centering
			\begin{tikzpicture}[RED/.style = {fill=red},
			BLUE/.style = {fill=blue},
			GREEN/.style = {fill=green},
			scale=0.8]
			\tikzstyle{every node}=[draw=black, fill=gray, shape=circle, minimum size=2pt,inner sep=0pt];
			\node (A) at (0,0.7) {}; 
			\node (B) at (0.6,1.5) {}; 
			
			\node (C) at (0.4,-0.1) {}; 
			\node (D) at (0.6,0.3) {}; 
			
			\node (E) at (1.4,0.1) {}; 
			\node (F) at (1.4,0.7) {}; 
			
			\node (G) at (1.2,1.1) {}; 
			\node (H) at (1.4,2) {}; 
			\node (I) at (1.85,0.75) {}; 
			\node (J) at (2.1,1.6) {}; 
			
			\node (K) at (2.2,-0.2) {}; 
			\node (L) at (2.2,0.8) {}; 
			
			\node (M) at (1.3,-1) {}; 
			\node (N) at (1.3,-0.5) {}; 
			
			\node (O) at (2,-1.4) {}; 
			\node (P) at (2,-0.8) {}; 
			
			\node (Q) at (2.7,0) {}; 
			\node (R) at (3.1,0) {}; 
			
			\node (L1) at (-2,1) {};
			\node (L2) at (-1.2,0.6) {};
			
			\node (L3) at (-0.6,0.9) {};
			\node (L4) at (-0.6,0.3) {};
			
			\node (R1) at (3.5,0.3) {};
			\node (R2) at (3.5,-0.3) {};
			
			\node (R3) at (4,0) {};
			\node (R4) at (4,-0.6) {};
			
			\node (R5) at (4.5,0.7) {};
			\node (R6) at (4.5,-0.3) {};
			
			\node (U1) at (1,2.5) {};
			\node (U2) at (1.6,2.5) {};
			
			\node (U3) at (1.4,3) {};
			\node (U4) at (1.4,4) {};
			
			\node (D1) at (1.8,-2.1) {};
			\node (D2) at (2.4,-1.9) {};

			\fill[gray,opacity=0.3] (A.center)--(B.center)--(H.center)--(J.center)--(R.center)--(O.center)--(M.center)--(C.center)--(A.center);
			
			\draw[very thick] (A)--(B);
			\draw[very thick] (C)--(D);
			\draw[very thick] (E)--(F);
			\draw[very thick] (G)--(H);
			\draw[very thick] (I)--(J);
			\draw[very thick] (K)--(L);
			\draw[very thick] (M)--(N);
			\draw[very thick] (O)--(P);
			\draw[very thick] (Q)--(R);
			\draw[very thick] (L1)--(L2);
			\draw[very thick] (L3)--(L4);
			\draw[very thick] (R1)--(R2);
			\draw[very thick] (R3)--(R4);
			\draw[very thick] (R5)--(R6);
			\draw[very thick] (D1)--(D2);
			\draw[very thick] (U1)--(U2);
			\draw[very thick] (U3)--(U4);
			
			\begin{scope}[shift={(-4,0.2)}]
			\node (H1) at (0.8,0) {};
			\node (H1') at (1.8,0) {};
			\node (H2) at (1.2,0.3) {};
			\node (H2') at (2.2,0.3) {};
			\end{scope}
			
			\begin{scope}[shift={(-6,0.2)}]
			\node (x) at (0.8,0) {};
			\node (y) at (1.8,0) {};
			\node (z) at (1.2,0.3) {};
			\node (t) at (2.2,0.3) {};
			\draw[very thick] (x)--(y);
			\draw[very thick] (z)--(t);
			\end{scope}
			
			\begin{scope}[shift={(4,0.2)}]
			\node (H3) at (0.8,0) {};
			\node (H3') at (1.8,0) {};
			\node (H4) at (1.2,0.3) {};
			\node (H4') at (2.2,0.3) {};
			\end{scope}
			
			\begin{scope}[shift={(6,0.2)}]
			\node (x) at (0.8,0) {};
			\node (y) at (1.8,0) {};
			\node (z) at (1.2,0.3) {};
			\node (t) at (2.2,0.3) {};
			\draw[very thick] (x)--(y);
			\draw[very thick] (z)--(t);
			\end{scope}
			
			\begin{scope}[shift={(1.7,6)}]
			\node (V1) at (0,-1.3) {};
			\node (V1') at (0,-2.3) {};
			\node (V2) at (0.25,-0.7) {};
			\node (V2') at (0.25,-1.7) {};
			\end{scope}
			
			\begin{scope}[shift={(1.7,8)}]
			\node (x) at (0,-1.3) {};
			\node (y) at (0,-2.3) {};
			\node (z) at (0.25,-0.7) {};
			\node (t) at (0.25,-1.7) {};
			\draw[very thick] (x)--(y);
			\draw[very thick] (z)--(t);
			\end{scope}
			
			\begin{scope}[shift={(1.7,-2)}]
			\node (V3) at (0,-1.3) {};
			\node (V3') at (0,-2.3) {};
			\node (V4) at  (0.25,-0.7){};
			\node (V4') at (0.25,-1.7) {};
			\end{scope}
			
			\begin{scope}[shift={(1.7,-4)}]
			\node (x) at (0,-1.3) {};
			\node (y) at (0,-2.3) {};
			\node (z) at (0.25,-0.7) {};
			\node (t) at (0.25,-1.7) {};
			\draw[very thick] (x)--(y);
			\draw[very thick] (z)--(t);
			\end{scope}

			\draw[very thick] (H1)--(H1');
			\draw[very thick] (H2)--(H2');
			\draw[very thick] (H3)--(H3');
			\draw[very thick] (H4)--(H4');
			
			\draw[very thick] (V1)--(V1');
			\draw[very thick] (V2)--(V2');
			\draw[very thick] (V3)--(V3');
			\draw[very thick] (V4)--(V4');
			
			\tikzstyle{every path}=[draw=red];
			
			\draw (L2)--(H2')--(L1)--(H2);
			\draw (L2)--(L3)--(A)--(L4)--(L2);
			\draw (V1)--(U4)--(V1')--(U3);
			\draw (H)--(U1)--(U3)--(U2)--(H);
			
			\draw (H3)--(H4)--(R5)--(H3)--(R6);
			\draw (R1)--(R3)--(R2)--(R4)--(R6)--(R3)--(R5);
			\draw (R1)--(R)--(R2);
			
			\draw (O)--(D1)--(V4)--(D2)--(O);
			
			\end{tikzpicture}
			\caption{An embedding of an edge crossing gadget zoomed in.}
			\label{fig:crossingZoom}
		\end{figure}

		\begin{figure}
			\hspace{-6em}
			\begin{tikzpicture} [scale=0.42,
			RED/.style = {fill=red},
			BLUE/.style = {fill=blue},
			GREEN/.style = {fill=green},
			]
			
			\tikzstyle{every node}=[draw=black, fill=gray, shape=circle, minimum size=2pt,inner sep=0pt];
			
			\node[GREEN] (A) at (-2,3.1) {};
			\node[BLUE] (B) at (-1,3.1) {};
			\node[RED] (C) at (-1.4,2.7) {};
			\node[GREEN] (D) at (-0.4, 2.7) {};
			\draw[thick] (A)--(B);
			\draw[thick] (C)--(D);
			
			\vedgeL{(0.2,4)}{38}	
			\hedgeT{(0,2.5)}{40}
			
			\vedgeCT{(9.7,2.2)}{8}
			\vedgeCF{(21.7,-21.8)}{8}
			\vedgeCF{(33.7,-33.8)}{10}
			
			\vedgeT{(9.7,-11.7)}{6}
			\vedgeT{(9.7,-23.7)}{6}
			\vedgeT{(9.7,-35.7)}{8}
			\vedgeL{(22,-35.7)}{6}
			
			\hedgeT{(0,-9.5)}{6}
			\hedgeT{(12,-9.5)}{28}
			\hedgeF{(0,-21.5)}{6}
			\hedgeF{(12,-21.5)}{28}
			\hedgeF{(0,-33.5)}{6}
			\hedgeF{(12,-33.5)}{6}
			\hedgeF{(24,-33.5)}{16}
			\crossing{(8,-9.7)}{1}
			\crossing{(8,-21.7)}{1}
			\crossing{(8,-33.7)}{1}
			\crossing{(20,-33.7)}{1}
			
			\hedgeL{(8.6,-44.8)}{10}
			\hedgeF{(24.3,-44.8)}{10}
			
			\nae{(22,-45)}
			
			\end{tikzpicture}
			\caption{An example embedding and coloring of a single clause $(q \vee s \vee t)$ using line segments.}
			\label{fig:exampleSegmentEmb}
		\end{figure}
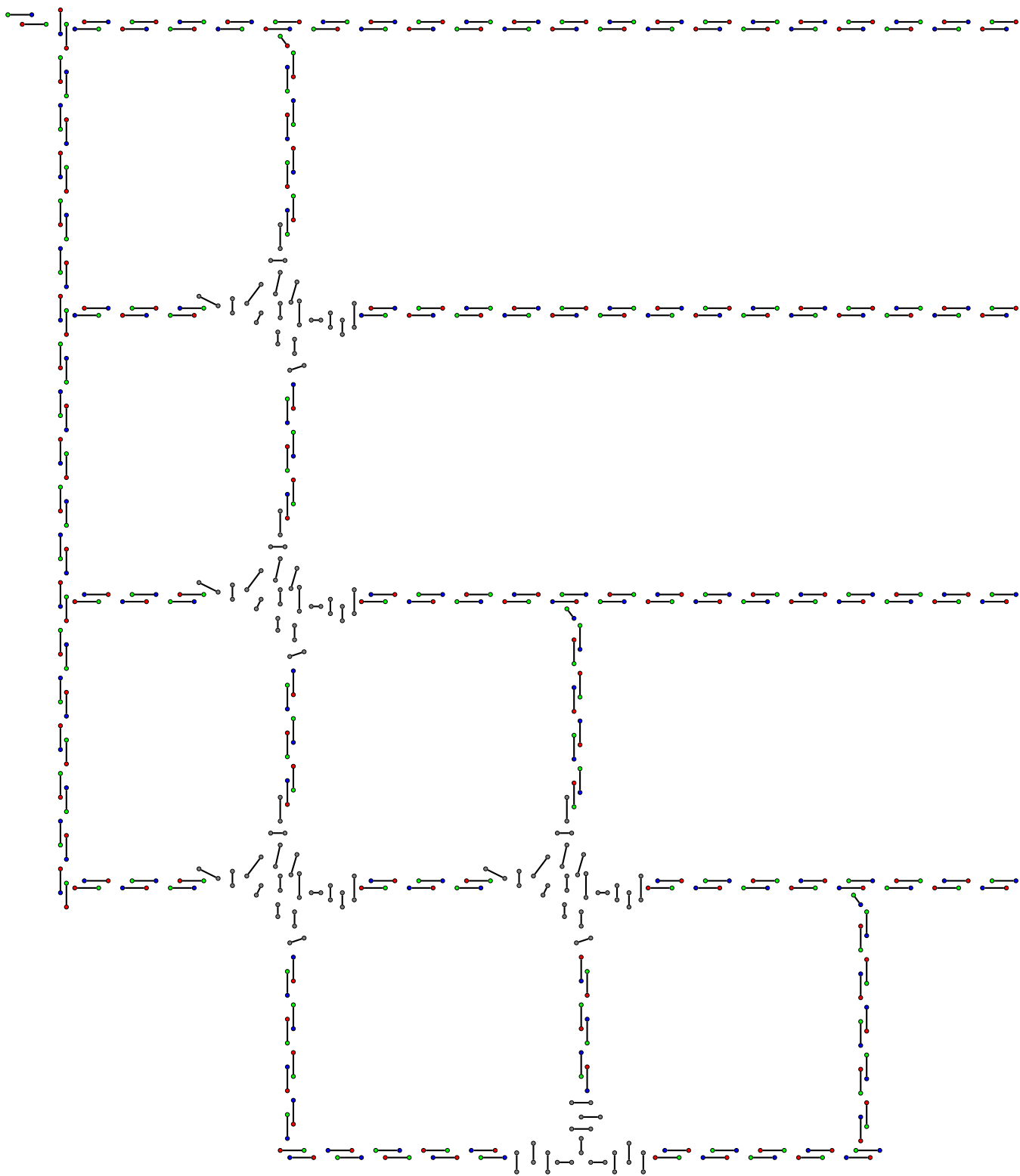
		
		\subsection{The 3-coloring problem for unit disk point visibility graphs}
		
		The idea behind the gadgets used to prove NP-hardness of the 3-coloring problem for unit disk segment visibility graphs can be utilized in order to informally prove the NP-hardness of unit disk point visibility graphs as well, with some small modifications. Having this stated, we give the following remark.
		
		\begin{rem}
			K\'{a}ra et al. showed that there are exactly five cases when the visibility graph of a set of points is 3-colorable (see Figure 3 in \cite{Kara_pointVisChromatic}).
			However, note that our model considers also the Euclidean distances and thus this particular result does not apply to our case when the set of points are not bounded by a circle of diameter 1.
			
			The NP-completeness reduction for unit disk point visibility graphs are straightforward from Gr\"{a}f et al.'s proof of 3-colorability of unit disk graphs \cite{Graf_udgColoring}.	
		\end{rem}
		
		In Figure \ref{fig:longEdgePt}, we show the gadget to transfer the color on a long edge for a unit disk point visibility graph. In Figure~\ref{fig:crossingGadgetPt}, we give the gadget to replace edge crossings. In Figure~\ref{fig:clauseGadgetPt}, we show the embedding of the points in the Euclidean plane. Then, the same reduction given in Section~\ref{sec:3coloringSegment} for unit disk segment visibility graphs can be utilized to prove the NP-completeness of 3-coloring problem for unit disk point visibility graphs.
		
		\begin{figure}[htbp]
			\centering
			\captionsetup[subfigure]{position=t}
			\begin{subfigure}[b]{0.4\linewidth}
				\centering
				\hspace{-0.5em}
				\begin{tikzpicture}[scale=0.6,
				square/.style = {regular polygon, regular polygon sides=4, fill=gray, scale=2.5}	
				]
				\tikzstyle{every node}=[draw=black, fill=black, shape=circle, minimum size=2pt,inner sep=0pt];

				\coordinate (1) at (0,1.5);
				\coordinate (2) at (1.5,1.9);
				\coordinate (3) at (1.5,1.1);
				\coordinate (4) at (3,1.5);
				\coordinate (5) at (4.5,1.9);
				\coordinate (6) at (4.3,1.1);
				\coordinate (7) at (4.5,0);
				\coordinate (8) at (6,1.5);
				\coordinate (9) at (3.9,-1.5);
				\coordinate (10) at (5.1,-1.5);
				\coordinate (11) at (4.5,-3);

				\foreach \i in {2,...,10}
				{
					\node (\i) at (\i) {};
					\draw[gray] (\i) circle (1cm);
				}
				\draw[gray] (1) circle (1cm);
				\draw[gray] (11) circle (1cm);
				
				\node[square] (1) at (1) {};
				\node[square] (11) at (11) {};
				
				\draw (1)--(2)--(3)--(1);
				\draw (2)--(3)--(4)--(2);
				\draw (4)--(5)--(6)--(4);
				\draw (6)--(7)--(5);
				\draw (5)--(6)--(8)--(5);
				\draw (7)--(9)--(10)--(7);
				\draw (9)--(11)--(10);

				\end{tikzpicture}
				\caption{Long edges for unit disk point visibility graphs.}
				\label{fig:longEdgePt}
			\end{subfigure}
			~
			\begin{subfigure}[b]{0.4\linewidth}
				\centering
				\hspace{-0.5em}
				\begin{tikzpicture}[scale=1.5]
				\draw (0,0) to[grid with coordinates] (2.5,2.5);
				
				\tikzstyle{every node}=[draw, fill=black, shape=circle, minimum size=2pt,inner sep=0pt];
				
				\node[label={\footnotesize $a$}] (A) at (0,1) {}; 
				\node[label={\footnotesize $b$}] (B) at (0.25,1.95) {}; 
				\node[label=135:{\footnotesize $c$}] (C) at (0.6,1) {}; 
				\node[label=135:{\footnotesize $d$}] (D) at (0,0.2) {}; 
				\node[label=20:{\footnotesize $e$}] (E) at (1.2,2.1) {}; 
				\node[label=20:{\footnotesize $f$}] (F) at (1.1,1.7) {}; 
				\node[label=left:{\footnotesize $g$}] (G) at (1.4,1) {}; 
				\node[label=-10:{\footnotesize $h$}] (H) at (1.1,0.4) {}; 
				\node[label=100:{\footnotesize $i$}] (I) at (0.9,0) {}; 
				\node[label=above:{\footnotesize $j$}] (J) at (2.1,1.8) {}; 
				\node[label=0:{\footnotesize $k$}] (K) at (1.8,1) {}; 
				\node[label={\footnotesize $l$}] (L) at (1.9,0) {}; 
				\node[label=0:{\footnotesize $m$}] (M) at (2.4,0.85) {}; 

				\tikzstyle{every path}=[draw, color=red];
				
				\draw (A)--(B);
				\draw (A)--(C);
				\draw (A)--(D);
				
				\draw (B)--(E);
				\draw (B)--(F);
				
				\draw (C)--(D);
				\draw (C)--(H);
				\draw (C)--(G);
				\draw (C)--(F);
				
				\draw (D)--(I);
				
				\draw (E)--(F);
				\draw (E)--(J);
				
				\draw (F)--(G);
				\draw (F)--(K);
				
				\draw (G)--(H);
				\draw (G)--(K);
				
				\draw (H)--(I);
				\draw (H)--(L);	
				\draw (H)--(K);		
				
				\draw (I)--(L);
				
				\draw (J)--(K);
				\draw (J)--(M);
				
				\draw (K)--(M);
				
				\draw (L)--(M);
				\end{tikzpicture}
				\caption{The point configuration to be replaced with edge crossings which transfers the color from $a$ to $m$, and from $e$ to $i$.}
				\label{fig:crossingGadgetPt}
			\end{subfigure}
			
			\begin{subfigure}[b]{0.7\linewidth}
				\centering
				\begin{tikzpicture}[scale=0.6, 
				square/.style = {regular polygon, regular polygon sides=4, fill=gray, scale=2.2},
				triangle/.style = {regular polygon, regular polygon sides=3, fill=gray, scale=2.2}
				]
				\tikzstyle{every node}=[draw, fill=black, shape=circle, minimum size=2pt,inner sep=0pt]
				\node[fill=none, shape=rectangle, minimum width=1.6cm, minimum height=0.8cm] at (8.75,2) {};	
				
				\node[square]  (1) at (1.5,3.8) {};
				\node (2) at (1.5,1.9) {};
				\node (3) at (2,2.5) {};
				\node (4) at (3,1.5) {};
				\node (5) at (4.5,1.9) {};
				\node (6) at (4.5,1.1) {};
				\node[square] (7) at (6,1.5) {};
				\node[triangle, rotate=-90] (8) at (7.5,1.5) {};
				\node[triangle, rotate=180] (9) at (8.7,2.5) {};
				\node[triangle, rotate=90] (10) at (10,1.5) {};
				\node[square] (11) at (11.5,1.5) {};
				\node (12) at (13,1.9) {};
				\node (13) at (13,1.1) {};
				\node (14) at (14.5, 1.5) {};
				\node (15) at (15.5, 2.5) {}; 
				\node (16) at (16, 1.9) {};
				\node[square] (17) at (16, 3.8) {};
				\node[square] (18) at (8.7,3.8) {};

				\foreach \i in {1,...,18}
				{
					\draw[gray] (\i) circle (1cm);
				}

				\draw (1)--(2)--(3)--(1);
				\draw (2)--(3)--(4)--(2);
				\draw (4)--(5)--(6)--(4);
				\draw (6)--(7)--(5);
				\draw (7)--(8)--(9)--(10)--(8);
				\draw (10)--(11);
				\draw (11)--(12)--(13)--(11);
				\draw (9)--(18);
				\draw (12)--(13)--(14)--(12);
				\draw (14)--(15)--(16)--(14);
				\draw (15)--(16)--(17)--(15);
				
				\end{tikzpicture}
				\caption{Monotone NAE3SAT clause for a unit disk point visibility graph.}
			\end{subfigure}
			\caption{Three main components to build the NP-hardness gadget for unit disk visibility graph of a set of points.}
			\label{fig:clauseGadgetPt}
		\end{figure}
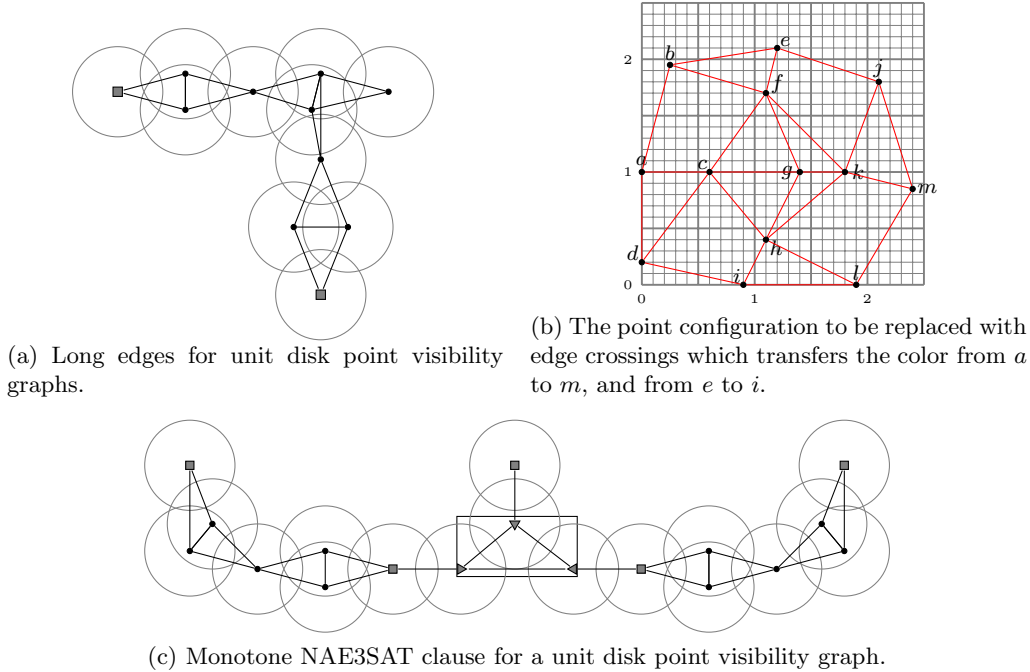
		
		\subsection{The 3-coloring problem for unit disk visibility graphs for polygons with holes}
		
		Before proving Theorem~\ref{thm:withholes}, we describe the gadgets used to construct a polygon with holes (which has a corresponding unit disk visibility graph for the constructed polygon) from a given 4-regular planar graph in more detail, and show that they correctly transform an instance of the 3-coloring problem for 4-regular planar graphs to an instance of the 3-coloring problem for unit disk visibility graphs of polygons with holes.
		
		\subsubsection{The corridors}\label{sec:corridor}
		
		We first describe how we model the edges of a given 4-regular planar graph. In Figure~\ref{fig:polygonEdge}, there are two nodes, $u$ and $v$, and a ``corridor'' which connects them.
		The interior of the polygon is shaded, and the boundaries are indicated with bold lines.
		The visibility edges are indicated using thin lines, and colored red. When the number of vertices on each side of the corridor (excluding $u$ and $v$) is a positive multiple of 3, it is trivial to verify that $u$ and $v$ receive different colors in a proper 3-coloring.
		
		
		Therefore, \emph{a corridor} shown in Figure~\ref{fig:polygonEdge} replaces the edges in a given planar graph.
		We use the same idea which we used to model edges in unit disk segment visibility graphs.
		However, unlike the wires, instead of transferring a color along a long edge, our gadget makes sure that two ends of an edge receives different colors since these ends correspond to adjacent vertices of the given 4-regular planar graph.
		A corridor consists of two polygonal chains $A$ and $B$ with edges $a_1a_2, a_2a_3, \dots, a_{k-1}a_k$ and $b_1b_2, b_2b_3, \dots, b_{k-1}b_k$, respectively. It is trivial to see that we can obtain a unit disk visibility graph for a polygon with holes, where for each $i$, the visibility edges $a_ib_i$, and $b_ia_{i+1}$ exist as visibility edges\footnote{Even if we assume that no three vertices can be collinear, we can slightly perturb the vertices by $\varepsilon$ units where $\varepsilon$ is some positive number which can be represented using polynomially many decimal digits.}.
		This basically describes an induced subgraph with $2k$ vertices, $2k-2$ boundary edges, and $2k-1$ visibility edges.
		Moreover, the largest induced cycle is 3 (which means this is a chordal graph), and each triplet $(a_i,a_{i+1},b_i)$ and $(b_i,b_{i+1},a_i)$ yields a $C_3$.
		Now, suppose that the polygonal chain $A$ has two neighboring vertices $u$ and $v$ where $ua_1$ and $va_k$ are two polygonal edges.
		Assuming that $k = 3c$ for some constant $c \in \mathbb{N}^+$, $u$ and $v$ receive different colors in a 3-coloring of the described subgraph.
		
		\subsubsection{The chambers}\label{sec:chamber}
				
		Now, let us describe the gadget which replaces the vertices in a given planar graph, which we refer to as a ``chamber''. \emph{A chamber} is an induced subgraph with 12 vertices $c_1, \dots, c_{12}$ with boundary edges $c_1c_2$, $c_3c_4$, $c_4c_5$, $c_6c_7$, $c_7c_8$, $c_9c_{10}$, $c_{10}c_{11}$, and $c_{12}c_1$. 
		Figure~\ref{fig:polygonVertex} shows an embedding of a chamber of a polygon $P$ where the interior of the polygon is shaded. The vertex $c_1$ is at the center of some circle $\mathcal{C}$ with radius 1.
		Let us refer to such a vertex as \emph{the central vertex} of the chamber. The vertices $c_i$ for $i=2,3,5,6,8,9,11,12$ are on the boundary of $\mathcal{C}$, and the remaining vertices $c_4, c_7$, and $c_{10}$ are outside $\mathcal{C}$, which means they do not see $c_1$.
		The vertices that are on the boundary of $\mathcal{C}$ are four pair of ``openings'' to the corridors which connect chambers together since the given planar graph is 4-regular. A unit disk is drawn around the central vertex to demonstrate the visibility relations between it and the opening of the corridors. The eight vertices on that unit disk are called the corridor vertices of a chamber, and the remaining three vertices are called the connecting vertices of a chamber. The connecting vertices are essential because the color of the corridor vertices must be dependent only on the central vertex. In this case, if the input graph has two adjacent vertices $u$ and $v$, then there exists a pair of chambers $U$ and $V$, and a corridor with $3c$ vertices which connects $U$ and $V$, and the central vertices $c_u$ and $c_v$ of $U$ and $V$ must receive different colors.
		
		\subsubsection{The proof of Theorem~\ref{thm:withholes}}

	We now show that the 3-coloring problem for unit disk visibility graphs of polygons with holes is NP-hard by giving a reduction from the 3-coloring problem for 4-regular planar graphs \cite{Dailey_4regularplanar}.
	
	\begin{thm}\apxmark  \label{thm:withholes}
		There is a polynomial-time reduction from the 3-coloring problem for 4-regular planar graphs to the 3-coloring problem for unit disk visibility graphs of polygons with holes.
	\end{thm}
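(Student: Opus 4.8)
The plan is to assemble the two gadgets of Sections~\ref{sec:corridor} and~\ref{sec:chamber} — the \emph{corridor}, which forces its two ends to receive different colors, and the \emph{chamber}, an induced $12$-vertex subgraph in which the eight corridor vertices and the three connecting vertices are consistently colorable once the central vertex is colored — into a single polygon with holes whose unit disk visibility graph encodes a given $4$-regular planar graph $G$.

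First I would fix a planar embedding of $G$ on an integer grid of size polynomial in $n=|V(G)|$ (for instance an orthogonal drawing), computable in polynomial time; then replace each vertex of $G$ by a chamber whose central vertex sits at (a slight perturbation of) the corresponding grid point, and replace each edge $uv$ of $G$ by a corridor leaving an opening of the chamber of $u$, following the drawn route of $uv$, and entering an opening of the chamber of $v$. Because $G$ is $4$-regular, the four pairs of openings of every chamber are used up exactly. Routing the corridors along the pairwise disjoint edge curves of the drawing guarantees that no two corridors, and no corridor and a foreign chamber, ever meet, so the union of all chamber boundaries and corridor walls bounds a polygon whose unbounded face gives the exterior boundary and whose bounded faces — one per face of the drawing of $G$, plus the small faces inside each chamber — become the holes. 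To pin down the visibilities I would invoke the perturbation remark of Section~\ref{sec:corridor}: the vertices can be nudged by an $\varepsilon$ with polynomially many digits so that no three are collinear and the only visibility edges inside a corridor or a chamber are exactly those catalogued in Sections~\ref{sec:corridor} and~\ref{sec:chamber}; applying the rescaling of Lemma~\ref{lem:scaledown} brings every pair of intended neighbors within distance one, so the unit-disk cut-off destroys all long-range visibilities without changing the local combinatorics.

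Correctness then follows from the two gadget properties stated earlier. For the forward direction, from a proper $3$-coloring $\chi$ of $G$ color the central vertex $c_u$ of each chamber $U$ by $\chi(u)$; by the chamber property the remaining eleven vertices of $U$ extend to a proper coloring with each opening's color determined by $\chi(u)$, and by the corridor property every corridor (with $3c$ vertices, $c\in\mathbb{N}^+$) admits a proper $3$-coloring precisely because the two central vertices it joins — the endpoints of the corresponding edge of $G$ — received distinct colors. Conversely, a proper $3$-coloring of the unit disk visibility graph of the constructed polygon restricts, on the central vertices, to a coloring in which $c_u$ and $c_v$ differ whenever $uv\in E(G)$, which is a proper $3$-coloring of $G$. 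Hence $G$ is $3$-colorable if and only if the constructed polygon-with-holes visibility graph is. Since $G$ has $n$ vertices and $2n$ edges, the construction uses $n$ constant-size chambers and $2n$ corridors each of length polynomial in the grid, so it has polynomial size and is produced in polynomial time; as $3$-coloring is NP-hard for $4$-regular planar graphs~\cite{Dailey_4regularplanar}, the theorem follows.

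The main obstacle I anticipate is the geometric realization: one must argue simultaneously that (i) the corridors are routable without crossings along the planar drawing, (ii) each corridor can be given a length that is a positive multiple of $3$ — padding the route with a bounded number of extra wall vertices when parity forces it — and (iii) the whole figure admits coordinates with polynomially many digits while keeping \emph{every} intended visibility edge strictly shorter than one unit and letting \emph{no} unintended visibility edge appear across corridor walls or chamber boundaries. Everything beyond that is bookkeeping on top of the corridor and chamber lemmas.
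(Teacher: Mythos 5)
Your proposal follows essentially the same route as the paper: replace each vertex of the $4$-regular planar graph by a chamber and each edge by a corridor of length $3c$, argue that a $3$-coloring of the graph transfers to the central vertices and extends through the gadgets, and conversely, with the same polynomial bound on the number of gadget vertices. The extra care you take about routing along a polynomial-size planar (orthogonal) embedding, padding corridor lengths to a multiple of $3$, and keeping coordinates polynomially representable makes explicit some geometric details the paper only sketches, but it is the same reduction.
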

	\begin{proof}
		Given a 4-regular planar graph, we construct a polygon with holes. Two main components of our reduction are as follows.
		
		(1) \emph{A corridor} shown in Figure~\ref{fig:polygonEdge} replaces the edges in a given planar graph.
		We use the same idea which we used to model edges in unit disk segment visibility graphs.
		However, instead of transferring a color along a long edge, our gadget makes sure that two ends of an edge receives different colors since the colors of these ends are determined by the colors of the corresponding adjacent vertices of the given 4-regular planar graph. Assuming that $k = 3c$ for some constant $c \in \mathbb{N}^+$, $u$ and $v$ receive different colors in a 3-coloring of the described subgraph.
		
		(2) \emph{A chamber} shown in Figure~\ref{fig:polygonVertex} replaces the vertices in a given planar graph. Since we give a reduction from 4-regular planar graphs, each chamber has exactly four corridors connected to it.
		The big vertex in the center, which is called the central vertex of the chamber, corresponds to a vertex of the given planar graph. In this case, if the input graph has two adjacent vertices $u$ and $v$, then there exists a pair of chambers $U$ and $V$, and a corridor with $3c$ vertices which connects $U$ and $V$, and the central vertices $c_u$ and $c_v$ of $U$ and $V$ must receive different colors.
		
		Given a 4-regular planar graph $H$ on $n$ vertices $v_1, \dots, v_n$, we construct the corresponding polygon $P$ with holes as follows:
		
		\begin{itemize}
			\item For each vertex $v_i$, add a chamber to $P$ whose central vertex is vertex $u_i$.
			\item For each pair of adjacent vertices $(v_i,v_j)$, add a corridor to $P$ between the chambers with central vertices $u_i$ and $u_j$. 
		\end{itemize}
		
		Considering any 3-coloring of $H$, the color given to the vertex $v_i \in H$ can be given to the central vertex $u_i$ of the chamber of $P$ replacing $v_i$, and the colors of central vertices determines the colors of the vertices of corridor, thus a 3-coloring of $P$. Considering any 3-coloring of $P$, the color given to the central vertex $u_i$ of the chamber of $P$ can be given to the vertex $v_i \in H$ replaced by $u_i$. Therefore, $P$ has a 3-coloring if and only if the corresponding color given to the vertices of $H$ yields a 3-coloring.
		
		\vspace{0.2cm}
		\noindent{\textbf{The time and space complexity.}} Given a 4-regular planar graph $H$ on $n$ vertices, we add $n$ chambers to $P$, each having 12 vertices. The positions of the centers of chambers can be determined with respect to any planar embedding of $H$. For each pair of adjacent vertices in $H$, we add a corridor to $P$ between the chambers corresponding to these vertices. The number of vertices on each polygonal chain of a corridor is at most $O(n)$, therefore at most $O(n + n) = O(n)$ vertices in a corridor, in total. Thus, both chambers and corridors take up polynomial space. Since there are $2n$ edges in $H$, there are at most $O(n + 2n^2) = O(n^2)$ vertices in $P$, thus polynomially many edges in $P$. As a result, the given reduction can be done in polynomial time and space.
		
		As we proved the correctness of our reduction and showed that it is a polynomial-time reduction, the theorem holds. Since the 3-coloring problem for 4-regular planar graphs is NP-complete \cite{Dailey_4regularplanar}, the  3-coloring problem for unit disk visibility graphs of polygons with holes is also NP-complete.
	\end{proof}
	
	Since the 3-coloring problem for 4-regular planar graphs is NP-complete \cite{Dailey_4regularplanar}, the 3-coloring problem for unit disk visibility graphs of polygons with holes is also NP-complete by Theorem~\ref{thm:withholes}.

	\def\polyV#1#2{%
		\begin{scope}[shift={#1}, rotate=#2]
			
			\coordinate (1) at (0,0);
			\coordinate (2) at (0.6, -0.8);
			\coordinate (3) at (0.8, -0.6);
			\coordinate (4) at (1.4, 0);
			\coordinate (5) at (0.8, 0.6);
			\coordinate (6) at (0.6, 0.8);
			\coordinate (7) at (0, 1.4);
			\coordinate (8) at (-0.6, 0.8);
			\coordinate (9) at (-0.8, 0.6);
			\coordinate (10) at (-1.4, 0);
			\coordinate (11) at (-0.8, -0.6);
			\coordinate (12) at (-0.6, -0.8);
			
			\draw[thick] (1)--(2);
			\draw[thick] (3)--(4)--(5);
			\draw[thick] (6)--(7)--(8);
			\draw[thick] (9)--(10)--(11);
			\draw[thick] (12)--(1);

			\fill[gray,opacity=0.3] (1) \foreach \i in {2,...,12}
			{
				--(\i)
			};		
			
		\end{scope}
	}
	
	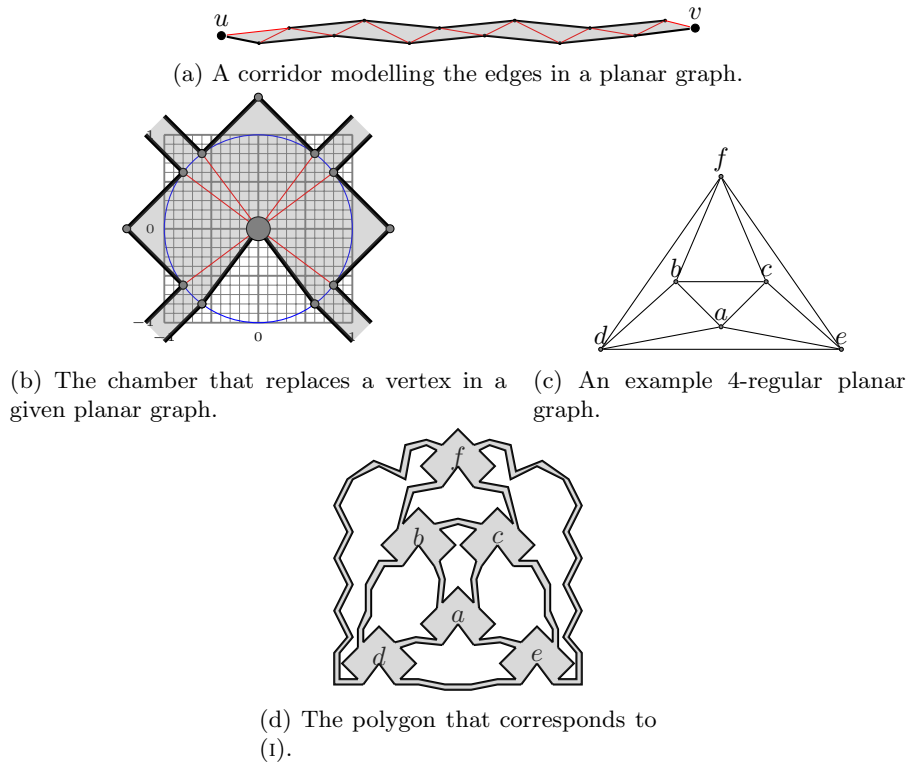
\begin{figure} [htbp]
		\captionsetup[subfigure]{position=b}
		\centering
		
		\begin{subfigure}[b]{0.6\linewidth}
			\centering
			\begin{tikzpicture}
			
			\tikzstyle{every node}=[draw=black, fill=black, shape=circle, minimum size=1pt,inner sep=0pt];

			\foreach [evaluate={\j=int(mod(\i,2));}] \i in {0,...,5}
			{
				\ifthenelse{\j = 1}
				{
					\coordinate (d\i) at (\i,0);
					\coordinate (u\i) at (\i + 0.4 ,0.2);
				}
				{
					\coordinate (d\i) at (\i,-0.1);
					\coordinate (u\i) at (\i + 0.4 ,0.1);
				}		
			}

			\foreach [evaluate={\j=\i+1;}] \i in {0,...,5}
			{
				\ifthenelse{\i = 5}{\draw[red] (u\i)--(d\i);}
				{
					\draw[thick,black] (u\i)--(u\j);
					\draw[thick,black] (d\i)--(d\j);
					\draw[red] (u\i)--(d\i);
					\draw[red] (u\i)--(d\j);				
				}
				\node at (u\i) {};
				\node at (d\i) {};
			}
			
			\fill[gray, opacity=0.3] (-0.5,0)--(u0)--(u1)--(u2)--(u3)--(u4)--(u5)--(5.8,0.1)--(d5)--(d4)--(d3)--(d2)--(d1)--(d0)--(-0.5,0);
			\node[label=$u$,fill=black,scale=3] (u) at (-0.5,0) {};
			\node[label=$v$,fill=black,scale=3] (v) at (5.8,0.1) {};
			\draw[red] (u)--(u0);
			\draw[thick] (u)--(d0);
			\draw[red] (v)--(u5);
			\draw[thick] (v)--(d5);
			\end{tikzpicture}
			\caption{A corridor modelling the edges in a planar graph.}
			\label{fig:polygonEdge}
		\end{subfigure}
		
		\begin{subfigure}[b]{0.4\linewidth}
			\centering
			\begin{tikzpicture} [scale = 1.25]
			\draw (-1,-1) to[grid with coordinates] (1,1);
			\draw[blue] (0,0) circle (1cm);
			
			\begin{scope}[shift = {(-1.4,-0.8)}]
			\tikzstyle{every node}=[draw=black, fill=gray, shape=circle, minimum size=3pt,inner sep=0pt];
			\coordinate (0) at (1.4,0.8);
			\coordinate (1) at (2,0);
			\coordinate (2) at (2.2,0.2);
			\coordinate (3) at (2.8,0.8);
			\coordinate (4) at (2.2,1.4);
			\coordinate (5) at (2,1.6);
			\coordinate (6) at (1.4,2.2);
			\coordinate (7) at (0.8,1.6);
			\coordinate (8) at (0.6,1.4);
			\coordinate (9) at (0,0.8);
			\coordinate (10) at (0.6,0.2);
			\coordinate (11) at (0.8,0);
			
			\coordinate (e1) at (2.4,-0.4);
			\coordinate (e2) at (2.6,-0.2);
			\coordinate (e4) at (2.6,1.8);
			\coordinate (e5) at (2.4,2);
			\coordinate (e7) at (0.4,2);
			\coordinate (e8) at (0.2,1.8);
			\coordinate (e10) at (0.2,-0.2);
			\coordinate (e11) at (0.4,-0.4);	
			
			\draw[ultra thick] (0)--(1)--(e1);
			\draw[ultra thick] (e2)--(2)--(3)--(4)--(e4);
			\draw[ultra thick] (e5)--(5)--(6)--(7)--(e7);
			\draw[ultra thick] (e8)--(8)--(9)--(10)--(e10);
			\draw[ultra thick] (e11)--(11)--(0);
			
			\foreach \i in {2,4,5,7,8,10}
			{
				\draw[red] (0)--(\i);
			}
			
			\foreach \i in {1,...,11}
			{
				\node at (\i) {};
				\ifthenelse { \i = 3 \OR \i = 6 \OR \i = 9}
				{}
				{
					\node at (\i) {};
				}
			}
			
			\node[scale=3] at (0) {};
			
			\fill[gray, opacity=0.3] (0)--(1)--(e1)--(e2)--(2)--(3)--(4)--(e4)--(e5)--(5)--(6)--(7)--(e7)--(e8)--(8)--(9)--(10)--(e10)--(e11)--(11)--(0);

			\end{scope}
			\end{tikzpicture}
			\caption{The chamber that replaces a vertex in a given planar graph.}
			\label{fig:polygonVertex}
		\end{subfigure}
		~
		\begin{subfigure}[b]{0.3\linewidth}
			\centering
			\begin{tikzpicture}
			\tikzstyle{every node}=[draw=black, fill=gray, shape=circle, minimum size=1.5pt,inner sep=0pt];
			\node[label=$a$] (a) at (0,0) {};
			\node[label=$b$] (b) at (-0.6,0.6) {};
			\node[label=$c$] (c) at (0.6,0.6) {};
			\node[label=$d$] (d) at (-1.6,-0.3) {};
			\node[label=$e$] (e) at (1.6,-0.3) {};
			\node[label=$f$] (f) at (0,2) {};
			
			\draw (a)--(b)--(c)--(a);
			\draw (d)--(e)--(f)--(d);
			\draw (a)--(d);
			\draw (a)--(e);
			\draw (b)--(d);
			\draw (b)--(f);
			\draw (c)--(e);
			\draw (c)--(f);
			
			\end{tikzpicture}
			\caption{An example 4-regular planar graph.}
			\label{fig:exampleGraph}
		\end{subfigure}
		~
		\begin{subfigure}[b]{0.32\linewidth}
			\centering
			\begin{tikzpicture}[scale=0.35]
			\node at (0,0.3) {$a$};
			\node at (-1.5,3.3) {$b$};
			\node at (1.5,3.3) {$c$};
			\node at (-3,-1.2) {$d$};
			\node at (3,-1.2) {$e$};
			\node at (0,6.3) {$f$};
			\tikzstyle{every node}=[draw=black, fill=gray, shape=circle, minimum size=1.5pt,inner sep=0pt];
			\polyV{(0,0)}{0};
			\polyV{(-1.5,3)}{0};
			\polyV{(1.5,3)}{0};
			\polyV{(-3,-1.5)}{0};
			\polyV{(3,-1.5)}{0};
			\polyV{(0,6)}{0};
			\tikzstyle{every path}=[thick];
			
			\draw (-0.6,0.8)--(-0.5,1.7)--(-0.7,2.4);
			\draw (-0.8,0.6)--(-0.7,1.7)--(-0.9,2.2);
			\fill[gray,opacity=0.3] (-0.6,0.8)--(-0.5,1.7)--(-0.7,2.4)--(-0.9,2.2)--(-0.7,1.7)--(-0.8,0.6);
			
			\draw (0.6,0.8)--(0.5,1.7)--(0.7,2.4);
			\draw (0.8,0.6)--(0.7,1.7)--(0.9,2.2);
			\fill[gray,opacity=0.3] (0.6,0.8)--(0.5,1.7)--(0.7,2.4)--(0.9,2.2)--(0.7,1.7)--(0.8,0.6);
			
			\draw (-0.8,-0.6)--(-1.6,-0.5)--(-2.4,-0.7);
			\draw (-0.6,-0.8)--(-1.6,-0.7)--(-2.2,-0.9);
			\fill[gray,opacity=0.3] (-0.8,-0.6)--(-1.6,-0.5)--(-2.4,-0.7)--(-2.2,-0.9)--(-1.6,-0.7)--(-0.6,-0.8);
			
			\draw (0.8,-0.6)--(1.6,-0.5)--(2.4,-0.7);
			\draw (0.6,-0.8)--(1.6,-0.7)--(2.2,-0.9);
			\fill[gray,opacity=0.3] (0.8,-0.6)--(1.6,-0.5)--(2.4,-0.7)--(2.2,-0.9)--(1.6,-0.7)--(0.6,-0.8);
			
			\draw (-0.9,3.8)--(0,4)--(0.9,3.8);
			\draw (-0.7,3.6)--(0,3.8)--(0.7,3.6);
			\fill[gray,opacity=0.3] (-0.9,3.8)--(0,4)--(0.9,3.8)--(0.7,3.6)--(0,3.8)--(-0.7,3.6);
			
			\draw (-2.3,2.4)--(-3,2.4)--(-3.5,1.5)--(-3.5,0.5)--(-3.8,-0.2)--(-3.8,-0.9);
			\draw (-2.1,2.2)--(-2.8,2.2)--(-3.3,1.3)--(-3.3,0.3)--(-3.6,-0.2)--(-3.6,-0.7);
			\fill[gray,opacity=0.3] (-2.3,2.4)--(-3,2.4)--(-3.5,1.5)--(-3.5,0.5)--(-3.8,-0.2)--(-3.8,-0.9)--(-3.6,-0.7)--(-3.6,-0.2)--(-3.3,0.3)--(-3.3,1.3)--(-2.8,2.2)--(-2.1,2.2);
			
			\draw (-2.1,3.8)--(-1.7,5)--(-0.6,5.2);
			\draw (-2.3,3.6)--(-1.9,5.2)--(-0.8,5.4);
			\fill[gray,opacity=0.3] (-2.1,3.8)--(-1.7,5)--(-0.6,5.2)--(-0.8,5.4)--(-1.9,5.2)--(-2.3,3.6);
			
			\draw (2.1,3.8)--(1.7,5)--(0.6,5.2);
			\draw (2.3,3.6)--(1.9,5.2)--(0.8,5.4);
			\fill[gray,opacity=0.3] (2.1,3.8)--(1.7,5)--(0.6,5.2)--(0.8,5.4)--(1.9,5.2)--(2.3,3.6);
			
			
			\draw (2.3,2.4)--(3,2.4)--(3.5,1.5)--(3.5,0.5)--(3.8,-0.2)--(3.8,-0.9);
			\draw (2.1,2.2)--(2.8,2.2)--(3.3,1.3)--(3.3,0.3)--(3.6,-0.2)--(3.6,-0.7);
			\fill[gray,opacity=0.3] (2.3,2.4)--(3,2.4)--(3.5,1.5)--(3.5,0.5)--(3.8,-0.2)--(3.8,-0.9)--(3.6,-0.7)--(3.6,-0.2)--(3.3,0.3)--(3.3,1.3)--(2.8,2.2)--(2.1,2.2);
			
			\draw (-2.2,-2.1)--(-1.5,-2.1)--(-0.5,-2.3)--(0.5,-2.3)--(1.5,-2.1)--(2.2,-2.1);
			\draw (-2.4,-2.3)--(-1.5,-2.3)--(-0.5,-2.5)--(0.5,-2.5)--(1.5,-2.3)--(2.4,-2.3);
			\fill[gray,opacity=0.3] (-2.2,-2.1)--(-1.5,-2.1)--(-0.5,-2.3)--(0.5,-2.3)--(1.5,-2.1)--(2.2,-2.1)--(2.4,-2.3)--(1.5,-2.3)--(0.5,-2.5)--(-0.5,-2.5)--(-1.5,-2.3)--(-2.4,-2.3);
			
			\draw (-3.8,-2.1)--(-4.5,-2.1)--(-4.5,-1.1)--(-4,-0.5)--(-4.5,0.5)--(-4,1.5)--(-4.5,2.5)--(-4,3.5)--(-4.5,4.5)--(-4,5.5)--(-3,6)--(-2,5.5)--(-1.8,6.6)--(-1.2,6.8)--(-0.8,6.6);
			\draw (-3.6,-2.3)--(-4.7,-2.3)--(-4.7,-1.1)--(-4.2,-0.5)--(-4.7,0.5)--(-4.2,1.5)--(-4.7,2.5)--(-4.2,3.5)--(-4.7,4.5)--(-4.2,5.5)--(-3.2,6.2)--(-2.2,5.8)--(-2,6.8)--(-1.2,7)--(-0.6,6.8);
			\fill[gray,opacity=0.3] (-3.8,-2.1)--(-4.5,-2.1)--(-4.5,-1.1)--(-4,-0.5)--(-4.5,0.5)--(-4,1.5)--(-4.5,2.5)--(-4,3.5)--(-4.5,4.5)--(-4,5.5)--(-3,6)--(-2,5.5)--(-1.8,6.6)--(-1.2,6.8)--(-0.8,6.6)--(-0.6,6.8)--(-1.2,7)--(-2,6.8)--(-2.2,5.8)--(-3.2,6.2)--(-4.2,5.5)--(-4.7,4.5)--(-4.2,3.5)--(-4.7,2.5)--(-4.2,1.5)--(-4.7,0.5)--(-4.2,-0.5)--(-4.7,-1.1)--(-4.7,-2.3)--(-3.6,-2.3);
			
			\draw (3.8,-2.1)--(4.5,-2.1)--(4.5,-1.1)--(4,-0.5)--(4.5,0.5)--(4,1.5)--(4.5,2.5)--(4,3.5)--(4.5,4.5)--(4,5.5)--(3,6)--(2,5.5)--(1.8,6.6)--(1.2,6.8)--(0.8,6.6);
			\draw (3.6,-2.3)--(4.7,-2.3)--(4.7,-1.1)--(4.2,-0.5)--(4.7,0.5)--(4.2,1.5)--(4.7,2.5)--(4.2,3.5)--(4.7,4.5)--(4.2,5.5)--(3.2,6.2)--(2.2,5.8)--(2,6.8)--(1.2,7)--(0.6,6.8);
			\fill[gray,opacity=0.3] (3.8,-2.1)--(4.5,-2.1)--(4.5,-1.1)--(4,-0.5)--(4.5,0.5)--(4,1.5)--(4.5,2.5)--(4,3.5)--(4.5,4.5)--(4,5.5)--(3,6)--(2,5.5)--(1.8,6.6)--(1.2,6.8)--(0.8,6.6)--(0.6,6.8)--(1.2,7)--(2,6.8)--(2.2,5.8)--(3.2,6.2)--(4.2,5.5)--(4.7,4.5)--(4.2,3.5)--(4.7,2.5)--(4.2,1.5)--(4.7,0.5)--(4.2,-0.5)--(4.7,-1.1)--(4.7,-2.3)--(3.6,-2.3);
			
			\end{tikzpicture}
			\caption{The polygon that corresponds to ({\sc i}).}
			\label{fig:examplePolygon}
		\end{subfigure}
		
		\caption{The gadgets used in the proof of Theorem~\ref{thm:withholes}.}
		\label{fig:PolHol}
	\end{figure}
	
	\section{Conclusion}
	
	We have introduced the unit disk visibility graphs, which models the real-world scenarios more accurately compared to the conventional visibility graphs, and proved the followings:
	
	\begin{itemize}
		\item Visibility graphs are a proper subclass of the unit disk visibility graphs. 
		
		\item Unit disk graphs are a proper subclass of unit disk point visibility graphs while they are neither a subclass nor a superclass of unit disk visibility graphs of a set of line segments, simple polygons and polygons with holes.
		
		\item The 3-coloring problem for unit disk segment visibility graphs is NP-complete.
		
		\item The 3-coloring problem for unit disk visibility graphs of polygons with holes is NP-complete.
	\end{itemize} 
	
	
	In the gadget used to prove NP-completeness of 3-coloring of unit disk segment visibility graphs, all line segments can be exactly one unit long except the edge crossings. Moreover, the rest of the gadget contains line segments either horizontal or vertical (parallel to $x$ or $y$-axis). Considering these facts, we pose these two interesting questions for reader's consideration: 
	
	\begin{open}
		Is the 3-colorability of unit disk visibility graphs of line segments NP-hard when all the segments are exactly 1 unit long?
	\end{open}
	
	\begin{open}
		Is the 3-colorability of unit disk visibility graphs of line segments NP-hard when all the segments are either vertical or horizontal?
	\end{open}
	
	As the above results show that unit disk visibility graphs are not included in the (hierarchic) intersection of unit disk graphs and visibility graphs, we would like to study the following problems which may have interesting results on unit disk visibility graphs.
	
	\begin{open}
		The maximum clique problem on unit disk visibility graphs.
	\end{open}
	
	This problem for unit disk graphs can be solved in polynomial time given \cite{Clark_UDmaxclique} or even without \cite{Raghavan_robust} the representation.
	Since the algorithm described by Clark et al. \cite{Clark_UDmaxclique} does not apply to unit disk visibility graphs due to possible obstacles between disks, it is an interesting problem to study for unit disk visibility graphs.
	
	\begin{open}
		The chromatic number problem on unit disk visibility graphs of polygons.
	\end{open}
	
	In \cite{Cagirici_chromaticpolygon}, it was proven that for visibility graphs of simple polygons, the 4-coloring problem can be solved in polynomial time, and the 5-coloring problem is NP-complete.
	The 3-coloring (even 4-coloring) problem for the unit disk visibility graphs of simple polygons is yet to be solved.
	
	\begin{open}
		The Hamiltonian cycle problem for unit disk segment visibility graphs.
	\end{open}
	
	Hoffman and T\'{o}th showed that every segment visibility graph yields a Hamiltonian cycle \cite{Hoffman_segment}.
	It is clearly not the case for unit disk segment visibility graphs considering two segments with endpoints on $(0,0)$, $(1,0)$, $(0,1)$, $(0,2)$. Thus, it is left as an open question.

	\bibliographystyle{plain}
	\bibliography{bibliography}	
	
\end{document}